\newenvironment{AvoidOverfullParagraph}[0]
{\sloppy\ignorespaces}
{\par\fussy\ignorespacesafterend}
\long\def\LongVersion#1\LongVersionEnd{#1}
\long\def\ShortVersion#1\ShortVersionEnd{}
\long\def\ExtraVersion#1\ExtraVersionEnd{}
\newtheorem{theorem}{Theorem}[section]
\newtheorem{corollary}[theorem]{Corollary}
\newtheorem{lemma}[theorem]{Lemma}
\newtheorem{claim}[theorem]{Claim}
\newtheorem{example}{Example}
\newcommand{\acron}[0]{MC-CWE}
\newcommand{\items}{M}
\newcommand{\auction}{A}
\newcommand{\agents}{N}
\newcommand{\demand}{D}
\newcommand{\demandi}[1][i]{{\demand_{#1}}}
\newif\iftechnicalreport
\newcommand{\comment}[1]{}
\newcommand{\be}{\begin{equation}}
\newcommand{\ee}{\end{equation}}
\newcommand{\argmin}{\mathop{\rm argmin}}
\newcommand{\argmax}{\mathop{\rm argmax}}
\newcommand{\mb}[1]{\mathbf{#1}}
\def \reals {{\mathbb R}}
\newcommand{\val}{v}
\newcommand{\vals}{{\mathbf \val}}
\newcommand{\vali}[1][i]{{\val_{#1}}}
\newcommand{\util}{u}
\newcommand{\utili}[1][i]{{\util_{#1}}}
\newcommand{\price}{p}
\newcommand{\prices}{{\mathbf \price}}
\newcommand{\pricei}[1][i]{{\price_{#1}}}
\newcommand{\alloc}{x}
\newcommand{\allocs}{{\mathbf \alloc}}
\newcommand{\alloci}[1][i]{{\alloc_{#1}}}
\newcommand{\partition}{{\mathbf A}}
\begin{document}

\title{Clearing Markets via Bundles}

\author{Michal Feldman\thanks{mfeldman@tau.ac.il. Tel-Aviv University.} \and Brendan Lucier\thanks{blucier@microsoft.com. Microsoft Research New England.}}

\maketitle


\maketitle

\begin{abstract}
We study algorithms for combinatorial market design problems, where a set of heterogeneous and indivisible objects are priced and sold to potential buyers subject to equilibrium constraints.
Extending the CWE notion introduced by Feldman et al. [STOC 2013],
we introduce the concept of a \emph{Market-Clearing Combinatorial Walrasian Equilibium} (MC-CWE) as a natural relaxation of the classical Walrasian equilibrium (WE) solution concept.
The only difference between a MC-CWE and a WE is the ability for the seller to bundle the items prior to sale.
This innocuous and natural bundling operation imposes a plethora of algorithmic and economic challenges and opportunities.
Unlike WE, which is guaranteed to exist only for (gross) substitutes valuations, a MC-CWE always exists.
The main algorithmic challenge, therefore, is to design computationally efficient mechanisms that generate MC-CWE outcomes that approximately maximize social welfare.
For a variety of valuation classes encompassing substitutes and complements (including super-additive, single-minded and budget-additive valuations), we design polynomial-time MC-CWE mechanisms that provide tight welfare approximation results.

\end{abstract}


\section{Introduction}

\begin{AvoidOverfullParagraph}
The {\em resource allocation} problem lies at the heart of theoretical economics:
how should scarce resources be allocated among individual agents with competing interests?
Since the emergence of the Internet, which enables complex resource allocation on a grand scale, this has naturally become a central problem in computer science as well.
Economists generally approach this problem by adopting the notion of {\em market equilibrium}.
Broadly speaking, a market equilibrium is a set of resource prices that are stable in the sense that all agents are maximally happy with their allocations and no resources are left unallocated.
A long line of work has been dedicated to addressing the existence of equilibrium prices,
and it has been shown (see, e.g., \cite{ArrowDebreu54}) that market equilibria exist very generally, as long as the market is {\em convex}.
\end{AvoidOverfullParagraph}

While this result sounds appealing in its generality,
the convexity assumption usually requires that resources be infinitely divisible.
In many applications of interest, especially those with a computational aspect, resources are indivisible; in these cases the convexity assumption is inapplicable.
Do the results from the convex environments carry over to non-convex environments?
In general the answer is no: the existence of equilibrium prices is not guaranteed.  As a result, the study of markets for indivisible goods tends to focus on specific cases for which such prices exists, such as when buyer values satisfy the \emph{gross substitutes} condition.

To be more precise, consider the following concrete model.
Suppose there are $m$ indivisible and heterogeneous items that should be allocated among $n$ agents.
The agents have potentially arbitrary preferences over bundles of items.
Formally, every agent $i \in [n]$ has a valuation function $\vali$ that maps every subset $S$ of items
into the value $\vali(S)$ that agent $i$ derives from the bundle $S$ (in monetary terms).
Given a price vector $\prices=(\price_1, \ldots, \price_m)$, a bundle $S$ is said to be in agent $i$'s {\em demand set} if $S$ maximizes $i$'s {\em utility} given $\prices$, defined as the difference between $\vali(S)$ and $\sum_{j \in S}p_j$.

A {\em Walrasian equilibrium} (WE) is an assignment of {\em item prices} to the $m$ items, and an assignment of the objects to the agents, such that: (1) every agent is allocated a bundle in his demand set, and (2) the market {\em clears}; i.e., all items are allocated\footnote{More precisely, every unallocated item is priced at zero.}.
Such a solution is truly appealing; every agent is maximally happy despite competing preferences,
the market clears, and the pricing structure is natural, simple, and transparent.
Unfortunately, WE do not exist in general.  A WE is guaranteed to exist only for a rather narrow class of valuations, known as gross substitutes (GS) valuations
(a strict subset of submodular functions) \cite{Gul1999}.
This eliminates any hope for the applicability of WE to environments with valuations that exhibit complementarities, and many forms of substitutes as well.

Recently, Feldman et al. 
\cite{Feldman2013}
proposed a 
relaxation of WE, termed a {\em combinatorial Walrasian equilibrium} (CWE).  In a CWE, the seller can choose to {\em bundle} objects prior to assigning prices.
This is a natural power to afford the seller, since as the owner of the resources he has some inherent ability to define what is meant by an ``item.''
The generated bundles induce a reduced market --- a market in which the items for sale are the bundles generated by the seller.
In addition to the bundling operation, the CWE further relaxes the WE notion in that it allows for items to remain unallocated (even when they are priced above zero).
It is easy to see that a CWE exists for any valuation profile, since the seller could bundle all objects into a single item.
The important issue, then, is whether there exists a CWE that is (approximately) efficient with respect to social welfare.
Indeed,
Feldman et al. 
show there always exists a CWE with at least half of the optimal (unconstrained) social welfare \cite{Feldman2013}.

The CWE notion relaxes the WE notion in two ways: (i) it allows bundling, (ii) it does not require market clearance.
While the bundling relaxation is central to the notion of CWE, the second relaxation warrants some discussion.
The relaxation of market clearance is somewhat at odds with the notion of a two-sided market equilibrium:
prices might not be stable from the seller's perspective.
After all, if an object (i.e.\ bundle) does not sell, the seller may be tempted to decrease its price in order to to sell it and increase revenue.  The concept of CWE therefore implicitly requires that the seller pre-commit to (sub-optimal) prices, in addition to committing to a bundling of the items.
With this in mind, we consider whether the relaxation of market clearance is truly necessary.  It is easy to see that the bundling relaxation alone is enough to guarantee existence of an equilibrium, so the question becomes one of welfare.
Can we hope to achieve the welfare bound of
\cite{Feldman2013}
without relaxing market clearance?




To answer this question we define the notion of a {\em Market-clearing CWE} (MC-CWE),
which allows the bundling operation, but requires market clearance.
A \acron{} is, therefore, precisely a WE over the reduced market; it differs from a WE only in the ability of the seller to pre-bundle the items,
and in particular it is a stronger (more restrictive) concept than CWE.

For a number of valuation classes, encompassing both substitutes and complements, we provide two types of results.
The first finds the fraction of the optimal social welfare that can be obtained in a \acron{} outcome.
The second addresses the same problem but under the additional requirement of operating in polynomial time.
Note that the approximation result established in
\cite{Feldman2013}
is only semi-computational --- given the optimal allocation, it finds in polynomial time a CWE outcome that gives at least a half of the optimal welfare.
Here, we devise polynomial approximation algorithms that do not need any initial allocation.
Moreover, all of our approximation results match the computational lower bounds for their corresponding valuation classes.


We note that while the focus of our paper is welfare maximization,
our analysis and results have some immediate implications on revenue maximization.
In particular, all of our approximation results carry over to revenue approximation,
since for each of our mechanisms the bound we obtain on the social welfare is precisely the revenue extracted at equilibrium.
We also observe that from a revenue maximization perspective, \acron{} might be preferred over WE even for GS valuations,
as it can extract \emph{arbitrarily} higher revenue than any WE solution (see Appendix \ref{app:revenue}).

\subsection{Our Results and Techniques}



\setlength{\tabcolsep}{2pt}
\begin{figure}
\begin{center}
\small{
\begin{tabular}{|c|c|c|c|c|}
\hline
\multirow{2}{*}{}&Uniform BA&Uniform BA&Single minded&Super additive\\
& identical budgets & & &\\
\hline
\multirow{2}{*}{ \acron{} gap}&1&$\geq 8/7$&1&1\\
& & $\leq2$  & &\\
\hline
\multirow{2}{*}{Poly-time \acron{} approx.} & $\leq 4/3$ & $ \leq 8/3$ & $O(m/\sqrt{\log m})$ [value] & $\theta(\sqrt{m})$\\
& & & $\theta(\sqrt{m})$ [demand] &\\
\hline
\end{tabular}
}
\label{tab:results}
\caption{Summary of our approximation results. The columns correspond to valuation classes. The first column corresponds to uniform budget additive valuations with identical budgets, and the second column corresponds to uniform budget additive valuations with arbitrary budgets.
The first row corresponds to the gap in social welfare due to \acron{}, disregarding computational considerations.
The second row corresponds to the approximation that can be achieved with a \acron{} poly-time mechanism.
All approximation results assume the value-query model, unless otherwise stated.
Note that $m$ is the number of items for sale.}
\end{center}
\end{figure}


We construct \acron{} mechanisms with certain welfare guarantees for various valuation classes.
These results are stated below and summarized in Table~\ref{tab:results}.

\paragraph{Super-additive valuations}
In the case where agent valuations are super-additive, we show that
there always exists a \acron{} that maximizes social welfare.
Note that it is not always possible to maximize social welfare without bundling, even if the market clearance requirement is relaxed: there exist input instances in which all bidders are single-minded, but every outcome with item pricing obtains only an $O(\sqrt{m})$ fraction of the optimal social welfare \cite{FGL-13}.  The use of bundling is therefore crucial in generating a socially efficient equilibrium outcome.

%
We next turn to computational algorithms.
We show how to construct a \acron{} that obtains an $O(\sqrt{m})$ approximation to the optimal social welfare in a polynomial number of \emph{demand queries}\footnote{A demand query returns the utility-maximizing set given a vector of item prices.}, matching known lower bounds \cite{Nisan2006}.  This mechanism is new; as far as we are aware, existing $O(\sqrt{m})$-approximations do not satisfy the conditions of \acron{}.  Our mechanism proceeds by first crafting an $O(\sqrt{m})$-approximate allocation and prices, then applying local search to repeatedly satisfy agent demands (bundling objects and/or raising prices in the process) until every agent obtains a demanded set at the given prices.  Our construction makes use of demand queries in a way similar to that of \cite{Feldman2013}: rather than querying demand sets over the original space of objects, we query demand over bundles of objects (under linear prices).
With {\em value queries}, we show that the $O(m/\sqrt{\log m})$-approximate mechanism due to \cite{HKNS2004} satisfies the \acron{} property.
We also show that in the case of {\em single-minded valuations}, our demand-query mechanism can be modified to achieve an $O(\sqrt{m})$ approximation in a polynomial number of value queries.

\paragraph{Sub-additive valuations: uniform budget additive}
We then turn to the space of sub-additive valuations.
Since efficient WE exist for the class of GS valuations, efficient \acron{} exist for this class as well.  We therefore consider a class of non-GS valuations: those that are {\em uniform budget-additive}.  In this auction problem, each item $a$ has a common value $v_a$, and each agent values the item at either $0$ or $v_a$.  Furthermore, each bidder has a budget that limits his value for any set of items.
For this class, we demonstrate that WE do not necessarily exist.
Moreover, we provide an instance in which no \acron{} can achieve more than a $7/8$ fraction of the optimal social welfare.
On the other hand, we show any allocation can be converted (in polynomial time) into a \acron{} outcome that achieves at least half of the original social welfare.
Thus, at least half of the optimal welfare can always be achieved in a \acron{} outcome.

Turning to computational consideration, the welfare-maximization problem for this valuation class is known to be APX-hard,
and the best-known algorithm achieves an approximation of $4/3$ (see \cite{Andelman04,Azar08,Srinivasan2008,Chakrabarty2008}).
Combined with the aforementioned algorithm, which converts every outcome to a \acron{} outcome with at least half of the welfare of the original outcome, this implies a \acron{} mechanism that achieves a $8/3$ approximation.
Our analysis is based on the observation that an outcome can be implemented at \acron{} if and only if that outcome is an optimal solution (among all fractional solutions) to a certain linear program: the {\em configuration LP} for the assignment problem restricted to the bundles in the outcome allocation.
Our construction is based upon local search, but of a different nature than our super-additive mechanism.  Rather than attempting to improve social welfare, we repeatedly bundle objects to \emph{reduce} the optimal \emph{fractional} welfare, shrinking the gap between fractional and integral solutions to the configuration LP.


We further show that if agents have identical budgets, the factor-$2$ loss disappears: any allocation can be made \acron{} without loss in social welfare. Yet, even within this restricted class, a Walrasian equilibrium may not exist.
These results are driven by connections between \acron{} and the configuration LP for the combinatorial assignment problem.

\subsection{Relation to Prior Work}
There is a long line of work studying pricing equilibria in theoretical economics.
\cite{Foley1967} and
\cite{Varian1974} initiated the study of envy-freeness outcomes.  Market-clearing prices in the market assignment problem were studied by 
\cite{Shapley1971}.  Subsequent works have studied conditions for the existence of Walrasian equilibria \cite{Aumann1975,Kelso1982,Leonard1983,Bikhchandani1997,Gul1999}.




An alternative to Walrasian equilibrium is to allow a seller to set (non-linear) prices on arbitrary bundles.
Such package auctions were formalized by
\cite{Bikhchandani2002}.
Some notable examples of combinatorial auction mechanisms that make use of bundle prices can be found in
\cite{Ausubel2002},
\cite{Wurman2000}, and
\cite{Parkes2000}.
Our notion of \acron{} differs from package auctions in that the seller commits to a partition of the objects, then sets linear prices over those bundles.




\cite{Fiat2009} study an extension of envy-freeness, multi-envy-freeness, in which no agent envies any subset of other agents.
This concept is related to our notion of \acron{}.
However, crucially, they restrict their definition to agents with single-minded types, which dampens the distinction from envy-freeness. 




Many of our mechanisms access agent valuations via \emph{demand queries}, whereby agents provide feedback to sellers through their choice of demand set at a given price vector.
See \cite{Blumrosen2005,Mirrokni2008,Dobzinski2011,Oren2012} for discussions on the power of the demand query model. 

The assignment problem with budget-additive bidders has received significant attention in purely algorithmic frameworks.
\cite{Andelman04} present an algorithm with approximation factor $(1-1/e)$, and also introduce the value-uniform variant of the problem.  This approximation factor was subsequently improved \cite{Azar08,Srinivasan2008},
leading eventually to a $4/3$ approximation algorithm due to
\cite{Chakrabarty2008}.

In a related paper,
\cite{FKL-12} introduce the notion of \emph{conditional equilibrium}, a relaxation of WE in which no buyer wishes to add additional items to his allocation under the given prices.  They show that every conditional equilibrium achieves at least half of the optimal social welfare, and a conditional equilibrium always exists when buyers have submodular valuations.  Our equilibrium concept \acron{} differs in that it is based on reducing the space of objects via bundling, rather than relaxing the notion of envy-freeness directly.

\section{Preliminaries}
\label{sec.prelim}

The auction setting considered in this work consists of a set $\items$ of $m$ indivisible objects and a set of $n$ agents.
Each agent has a valuation function $\vali(\cdot) : 2^\items \to \reals_{\geq 0}$ that indicates his value for
every set of objects, is non-decreasing (i.e., $\vali(S) \leq \vali(T)$ for every $S \subseteq T \subseteq \items$) and is normalized so that $\vali(\emptyset) = 0$.
The profile of agent valuations is denoted by $\vals=(\val_1,\dotsc,\val_n)$, and an auction setting is defined by a tuple $\auction=(\items,\vals)$.

A price vector $\prices=(\price_1, \dotsc, \price_m)$ consists of a price $\price_j$ for each object $j \in M$. An {\em allocation} is a vector of sets $\allocs = (\alloc_0, \alloc_1, \dotsc, \alloc_n)$, where $\alloc_i \cap \alloc_k = \emptyset$ for every $i\neq k$, and $\bigcup_{i=0}^{n}\alloci = \items$. In the allocation $\allocs$, for every $i\in\agents$, $\alloci$ is the bundle assigned to agent $i$, and $\alloc_0$ is the set of unallocated objects; i.e., $\alloc_0 = \items \setminus \bigcup_{i=1}^{n}\alloci$.

As standard, we assume that each agent has a quasi-linear utility function; i.e., the utility of agent $i$ being allocated bundle $\alloc_i$ under prices $\prices$ is $\utili(\alloci, \prices) = \vali(\alloci) - \sum_{j \in \alloci}\price_j.$
Given prices $\prices$, the {\em demand correspondence} $\demandi(\prices)$ of agent $i$ contains the sets of objects that maximize agent $i$'s utility:
\[
\demandi(\prices) = \left\{S^*:  S^* \in\argmax_{S\subseteq M}\{\utili(S,\prices)\}\right\}.
\]
A tuple $(\allocs,\prices)$ is said to be {\em buyer stable} for auction $\auction=(M,\vals)$
if $\alloci \in\demandi(\prices)$ for every $i\in\agents$.
A tuple $(\allocs,\prices)$ is said to be {\em seller stable} for auction $\auction=(M,\vals)$
if for every $j \in \alloc_0$, $\price_j=0$.
The seller stability condition is also known as {\em market clearance}.

A tuple $(\allocs,\prices)$ is said to be a {\em Walrasian equilibrium} (WE) for auction $\auction=(M,\vals)$
if it is both buyer stable and seller stable.

It is well known that a WE exists only under restricted valuation functions.
In particular, the class of \emph{gross substitutes} (GS) valuations is a maximal class that admits Walrasian equilibria \cite{Gul1999}.
This class is a strict subset of submodular valuations.
It is also known that if a WE exists, it is economically efficient (i.e., maximizes social welfare --- the sum of agents' valuations).


We next define the notion of a {\em Market-Clearing CWE} (\acron{}).
The crux of the concept is that items are pre-partitioned into indivisible bundles.
The constructed bundles are treated as indivisible objects, and the \acron{} notion reduces to WE over the bundles.
Crucially, although prices are now associated with bundles (of the original market), unlike previous notions,
prices are linear once bundles are fixed.
A formal definition follows.

For a partition $\partition=(\partition_1, \ldots, \partition_k)$ of the item set $M$ we slightly
abuse notation and denote by $\partition=\{\partition_1, \ldots, \partition_k\}$ the reduced set of items, where the
valuation of each agent $i$ of a subset $S \subseteq \partition$ is $\vali(\bigcup_{j:\partition_j \in S}\partition_j)$. We
denote by $\auction_{\partition}$ an auction over the reduced set of items $\partition$ with the induced valuation profile.

Every allocation $\allocs$ induces a partition $\partition(\allocs) = (\alloc_0, \ldots, \alloc_n)$.
A tuple $(\allocs,\prices)$, where $\allocs=(\alloc_0, \ldots, \alloc_n)$, and $\pricei$ is the price of $\alloci$ for every $\alloci\neq\emptyset$, is a {\em Market-Clearing CWE} (\acron{}) if $(\allocs,\prices)$ is a WE in the auction $\auction_{\partition(\allocs)}$.
An allocation $\allocs$ is said to be \acron{} if it admits a price vector $\prices \in \reals_{\ge 0}^{n+1}$ such that $(\allocs,\prices)$ is \acron{}. A mechanism is said to be \acron{} if it maps every valuation profile $\vals$ to an outcome $(\allocs,\prices)$ that is \acron{}.

\noindent {\bf Relation to Combinatorial Walrasian equilibrium (CWE)}

A tuple $(\allocs,\prices)$, where $\allocs=(\alloc_0, \ldots, \alloc_n)$, and $\pricei$ is the price of $\alloci$ for every $\alloci\neq\emptyset$, is a CWE if $(\allocs,\prices)$ is buyer stable in the auction $\auction_{\partition(\allocs)}$.
Note that a \acron{} is weaker than a WE, since it allows for a pre-sale partition of the goods.
On the other hand, \acron{} is stronger than a CWE, since it requires seller stability on top of buyer stability.

The \acron{} concept is also closely related to the notion of envy-freeness.
See more on this relation in Appendix~\ref{sec:ef}.

\subsection{Characterization}

The characterization of a CWE allocation is closely related to the characterization of an allocation that can be supported in a WE~\cite{Bikhchandani1997}.
A similar observation was already given in \cite{Feldman2013}, but we state it here for completeness, as it is used in later sections.

For a given partition $\partition$ of the objects, the allocation of $\partition$ to $\agents$ can be specified by a set of
integral variables $y_{_{i,S}}\in \{0,1\}$, where $y_{_{i,S}}=1$ if the set $S\subseteq\partition$ is allocated to agent $i\in\agents$ and
$y_{_{i,S}}=0$ otherwise.
These variables should satisfy the following conditions: $\sum_S y_{_{i,S}}\leq 1$ for every $i \in \agents$
(each agent is allocated to at most one bundle) and $\sum\limits_{i, S \supseteq \partition_j} y_{_{i,S}} \leq 1$ for every $\partition_j \in \partition$ (each element of the partition is allocated to at most one agent).
A \emph{fractional allocation} of $\partition$ is given by variables $y_{_{i,S}} \in [0,1]$ that satisfy the same conditions and
intuitively might be viewed as an allocation of divisible items.
The configuration LP for $\auction_{\partition}$ is given by the following linear program, which computes
the fractional allocation that maximizes social welfare.
\begin{eqnarray}
\max & & \sum_{i,S} \vali(S)\cdot y_{_{i,S}} \nonumber\\
\mbox{s.t.} & & \sum_S y_{_{i,S}} \leq 1 \mbox{ for every } i \in \agents \nonumber\\
& & \sum_{i, S \supseteq \partition_j} y_{_{i,S}} \leq 1 \mbox{ for every } \partition_j \in \partition \nonumber\\
& & y_{_{i,S}} \in [0,1] \mbox{ for every } i \in \agents, S \subseteq \partition \nonumber
\end{eqnarray}
The characterization given in \cite{Bikhchandani1997} states that a WE exists if and only if the optimal fractional
solution to the allocation LP occurs at an integral solution.
This characterization of a WE allocation can be used to derive a characterization of a \acron{} allocation.

Recall that every allocation $\allocs$ induces a partition $\partition(\allocs) = (\alloc_0, \ldots, \alloc_n)$.
The WE characterization implies the following \acron{} characterization.
\begin{claim}
\label{claim.frac}
An allocation $\allocs=(\alloc_0, \alloc_1, \ldots, \alloc_n)$ is a \acron{} for $\auction$ iff
the configuration LP for $\auction_{\partition(\allocs)}$ has an integral optimal solution that sets
$y_{_{i,\alloci}} = 1$ for all $i \in \agents$.
\end{claim}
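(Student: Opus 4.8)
The plan is to reduce the claim to the Bikhchandani--Mamer characterization of Walrasian equilibria, applied not to $\auction$ itself but to the reduced auction $\auction_{\partition(\allocs)}$. By definition, $\allocs$ is \acron{} for $\auction$ exactly when there are prices $\prices$ on the nonempty blocks of $\partition(\allocs)$ such that $(\allocs,\prices)$ is a WE in $\auction_{\partition(\allocs)}$. So it suffices to prove the following (refined) version of the WE characterization for an arbitrary auction $\auction'$ whose items are the blocks of a partition $\partition$: an allocation that gives block $\alloc_i$ to agent $i$ and leaves $\alloc_0$ unallocated is WE-supportable in $\auction'$ if and only if the point $\hat y$ with $\hat y_{i,\alloci}=1$ (and all other coordinates $0$) is an optimal solution of the configuration LP for $\auction'$. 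Since $\hat y$ is automatically integral and feasible (each agent gets one bundle; each block equals at most one $\alloci$), this is precisely the statement of Claim~\ref{claim.frac}. I would then just invoke this with $\auction'=\auction_{\partition(\allocs)}$.

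To prove the ``only if'' direction I would turn WE prices into a dual certificate. Writing the LP dual with variables $u_i$ (for the agent constraints) and $q_{\alloc_j}$ (for the block constraints), I would set $q_{\alloc_j}:=\price_j$ on each nonempty block, $q_{\alloc_0}:=0$, and $u_i:=\max_{T\subseteq\{0,\dots,n\}}\bigl(\vali(\bigcup_{j\in T}\alloc_j)-\sum_{j\in T}q_{\alloc_j}\bigr)$, i.e., $u_i$ is agent $i$'s utility at prices $\prices$ in $\auction_{\partition(\allocs)}$. Dual feasibility $u_i+\sum_{j\in T}q_{\alloc_j}\ge\vali(\bigcup_{j\in T}\alloc_j)$ is immediate from the definition of $u_i$ as a maximum. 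The two WE conditions are exactly the complementary-slackness conditions between $\hat y$ and $(u,q)$: buyer stability ($\alloci$ demanded) makes the dual constraint indexed by $(i,\{\alloc_i\})$ tight, which is where $\hat y_{i,\alloci}=1$; market clearance makes $q_{\alloc_j}$ vanish on any block that is not fully allocated. Primal feasibility of $\hat y$ together with dual feasibility and complementary slackness then certifies, via strong LP duality, that $\hat y$ is an LP optimum.

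For the ``if'' direction I would run the correspondence backwards. Given that $\hat y$ is LP-optimal, LP duality supplies an optimal dual $(u,q)$; define bundle prices $\price_j:=q_{\alloc_j}$. Complementary slackness against $\hat y_{i,\alloci}=1$ gives $u_i=\vali(\alloci)-\price_i$, and dual feasibility gives $u_i+\sum_{j\in T}\price_j\ge\vali(\bigcup_{j\in T}\alloc_j)$ for all $T$; subtracting yields $\vali(\alloci)-\price_i\ge\vali(\bigcup_{j\in T}\alloc_j)-\sum_{j\in T}\price_j$ for every $T$, i.e., $\alloci$ lies in agent $i$'s demand set in $\auction_{\partition(\allocs)}$ (buyer stability). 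Market clearance follows because the unallocated block $\alloc_0$ is contained in no configuration used by $\hat y$, so the block constraint for $\alloc_0$ is slack and complementary slackness forces $q_{\alloc_0}=0$; any other empty or unused reduced item is handled identically. Hence $(\allocs,\prices)$ is a WE in $\auction_{\partition(\allocs)}$, so $\allocs$ is \acron{}.

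The one place I expect to need genuine care is the bookkeeping around the unallocated set $\alloc_0$ and any empty $\alloci$: I must make sure these are treated consistently on the primal side (they appear in no positive-weight configuration, so their block constraints are slack) and on the dual side (their prices are therefore forced to zero), so that ``market clearance in $\auction_{\partition(\allocs)}$'' matches the complementary-slackness condition for the corresponding block constraints. Everything else is the textbook LP-duality proof of the first welfare theorem specialized to the configuration LP, and can be cited from \cite{Bikhchandani1997} (see also \cite{Feldman2013}).
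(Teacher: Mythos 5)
Your proof is correct and follows the same route the paper intends: apply the Bikhchandani--Mamer LP-duality characterization of Walrasian equilibrium to the reduced auction $\auction_{\partition(\allocs)}$. The paper itself states the claim with no worked-out argument---it simply points at \cite{Bikhchandani1997}---so your contribution is to fill in the standard strong-duality/complementary-slackness details, which you do correctly, including the care around $\alloc_0$ and empty blocks (their block constraints are slack at $\hat y$, forcing the corresponding dual prices to zero, which is precisely market clearance).
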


In light of Claim \ref{claim.frac}, the problem of finding a \acron{} allocation is equivalent to the problem of
finding bundles such that the optimal social welfare generated by fractional and integral allocations
\emph{of these bundles} are identical, then returning an efficient allocation of those bundles.

\paragraph{Gap in social welfare due to \acron{}: a lower bound.}
In \cite{Feldman2013}, a lower bound on the social welfare of any CWE allocation is established.
Specifically, using the equivalence of Claim \ref{claim.frac}, it is shown that
the welfare of any CWE allocation is at most $2/3$ of the welfare of an
optimal (non-CWE) allocation.
We note that the same example trivially carries over to the stronger notion of \acron{};
i.e., the welfare of any \acron{} allocation is at most $2/3$ of the optimal welfare.
The established lower bound holds for general valuations, but disappears for some families of valuation functions (as will be shown in later sections of this paper).

\section{Super-additive valuations}
\label{sec:super-add}

In this section we study \acron{} outcomes when agent valuations are super-additive.
We first show that there is no loss in efficiency due to \acron{}.  In particular, every efficient allocation is \acron{}.

We say that allocation $\allocs$ is \emph{bundle-efficient} for $\mb{v}$ if, for all functions $\beta : [n] \to [n]$, we have
$\sum_i v_i(\alloc_i) \geq \sum_i v_i\left( \bigcup_{j \in \beta^{-1}(i)} \alloc_j \right).$
That is, a bundle-efficient allocation $\allocs$ maximizes social welfare among all ways to allocate the (indivisible) bundles $\alloc_1, \dotsc, \alloc_n$ to the agents.

\begin{theorem}
\label{thm.superadd}
If agents are super-additive and $\allocs$ is a bundle-efficient allocation, then the price vector $\pricei = \vali(\alloci)$ is such that $(\allocs, \prices)$ is \acron{}.
\end{theorem}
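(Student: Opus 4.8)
The plan is to use the characterization of Claim~\ref{claim.frac}: the allocation $\allocs$ with partition $\partition(\allocs) = (\alloc_0, \alloc_1, \dotsc, \alloc_n)$ is \acron{} if and only if the configuration LP for $\auction_{\partition(\allocs)}$ attains its optimum at the integral solution $y_{i,\{\alloc_i\}} = 1$. So first I would observe that, since $\allocs$ is bundle-efficient, the integral allocation that assigns bundle $\alloc_i$ to agent $i$ is the welfare-maximizing \emph{integral} allocation of the bundles $\alloc_1, \dotsc, \alloc_n$ to the $n$ agents. It then remains to argue that no \emph{fractional} allocation of these bundles does strictly better, i.e., that the LP optimum is attained integrally.

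The heart of the argument is the super-additivity of the valuations, which I would use to ``round up'' any fractional solution without losing welfare. Given a feasible fractional solution $y$ to the configuration LP over the bundles $\partition(\allocs)$, consider its support: each agent $i$ is fractionally assigned a collection of sub-bundles (subsets of the index set $\{1,\dotsc,n\}$ of blocks), with the per-agent mass $\sum_S y_{i,S}\le 1$ and each block $\alloc_j$ fractionally assigned to total mass $\le 1$ across all agents. The claim is that $\sum_{i,S} v_i\big(\bigcup_{j\in S}\alloc_j\big)\, y_{i,S}$ is at most the bundle-efficient integral value $\sum_i v_i(\alloc_i)$. I would prove this by exhibiting, for the fractional solution, an integral allocation of the blocks whose welfare is at least the fractional welfare: super-additivity gives that if agent $i$ is fractionally assigned several sets $S$, replacing them all by their union weakly increases $i$'s value, and then a standard averaging/uncrossing argument on the block-marginals (each block used to extent $\le 1$) lets one choose an integral assignment of blocks to agents matching or beating the fractional average. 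Because $\allocs$ is bundle-efficient, this integral assignment has welfare at most $\sum_i v_i(\alloc_i)$, and we are done. Finally, with $\price_i = v_i(\alloc_i)$ each agent $i$ has utility exactly $0$ from $\alloc_i$ and, by the LP-duality/complementary-slackness side of the Bikhchandani--Mamer characterization (or directly: these prices are optimal LP dual variables for the blocks, split across the blocks inside each $\alloc_i$), weakly negative utility from every other bundle, so $(\allocs,\prices)$ is buyer-stable; it is trivially seller-stable since unsold blocks sit in $\alloc_0$ and can be priced at $0$. Hence $(\allocs,\prices)$ is \acron{}.

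The main obstacle I anticipate is the uncrossing/rounding step: turning an arbitrary fractional solution of the configuration LP over bundles into an integral one without welfare loss. The cleanest route is probably not to round directly but to argue via the LP's integrality on this restricted instance — namely, to show that setting dual prices $\price_i = v_i(\alloc_i)$ on block $\alloc_i$ (and distributing this price arbitrarily among the original items comprising $\alloc_i$) certifies, through weak LP duality, that the bundle-efficient value $\sum_i v_i(\alloc_i)$ upper-bounds the fractional optimum. Concretely, for any set $S$ of blocks, super-additivity gives $v_i\big(\bigcup_{j\in S}\alloc_j\big) \le \sum_{j\in S} v_j(\alloc_j) = \sum_{j \in S}\price_j$ would be too strong in general, so instead one uses that $v_i\big(\bigcup_{j\in S}\alloc_j\big) \le v_i(\alloc_i) + \sum_{j\in S, j\neq i}\price_j$ must be established from bundle-efficiency (swapping block $i$ out of $i$'s bundle and the blocks in $S$ in cannot help), which is exactly the buyer-stability inequality and simultaneously the dual-feasibility inequality. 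Once that inequality is in hand, weak duality closes the gap between fractional optimum and $\sum_i v_i(\alloc_i)$, Claim~\ref{claim.frac} applies, and the theorem follows. I would therefore organize the write-up around proving the single inequality $u_i(S,\prices)\le 0$ for all $S\subseteq\partition(\allocs)$ from bundle-efficiency and super-additivity, and then invoke the WE/\acron{} characterization.
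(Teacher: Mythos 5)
Your overall plan---verify buyer stability of $(\allocs,\prices)$ directly from bundle-efficiency and super-additivity, with seller stability trivial---is exactly the route the paper takes, and the detour through Claim~\ref{claim.frac} and LP duality, while not incorrect, is unnecessary overhead. However, the key step of your argument, as written, has a genuine confusion that would leave a gap if carried out literally.

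You dismiss the inequality $\vali\bigl(\bigcup_{j\in S}\alloc_j\bigr) \le \sum_{j\in S}\val_j(\alloc_j) = \sum_{j\in S}\price_j$ as ``too strong in general,'' and propose instead to establish the weaker $\vali\bigl(\bigcup_{j\in S}\alloc_j\bigr) \le \vali(\alloci) + \sum_{j\in S, j\neq i}\price_j$. But the first inequality is \emph{precisely} buyer stability here: with $\pricei = \vali(\alloci)$ we have $\utili(\alloci,\prices)=0$, so the requirement $\utili(S,\prices)\le \utili(\alloci,\prices)$ for $S\not\ni i$ is exactly $\vali\bigl(\bigcup_{j\in S}\alloc_j\bigr) \le \sum_{j\in S}\price_j$. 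Your ``weaker'' replacement, for $i\notin S$, reads $\vali\bigl(\bigcup_{j\in S}\alloc_j\bigr) \le \vali(\alloci)+\sum_{j\in S}\price_j$, which is off by exactly $\pricei=\vali(\alloci)$ and gives only $\utili(S,\prices)\le\pricei$, not $\le 0$. Moreover, the move you propose to prove it---swapping $\alloci$ out of $i$'s bundle and the blocks of $S$ in---uses bundle-efficiency alone and indeed yields only this insufficient bound.

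The fix, and what the paper actually does, is to apply bundle-efficiency to the reallocation in which agent $i$ receives $\bigcup_{j\in S\cup\{i\}}\alloc_j$ (keeping $\alloci$), giving
\[
\vali\Bigl(\bigcup_{j\in S\cup\{i\}}\alloc_j\Bigr) \le \vali(\alloci) + \sum_{j\in S}\val_j(\alloc_j),
\]
and then invoke super-additivity to split the left-hand side as $\vali\bigl(\bigcup_{j\in S\cup\{i\}}\alloc_j\bigr) \ge \vali(\alloci) + \vali\bigl(\bigcup_{j\in S}\alloc_j\bigr)$. Subtracting $\vali(\alloci)$ from both sides yields the needed $\vali\bigl(\bigcup_{j\in S}\alloc_j\bigr) \le \sum_{j\in S}\price_j$. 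The case $i\in S$ follows from bundle-efficiency alone with no super-additivity needed. So the inequality you labeled ``too strong'' is exactly the target, and super-additivity is used to bridge from the bundle-efficiency bound on $\bigcup_{j\in S\cup\{i\}}\alloc_j$ to the required bound on $\bigcup_{j\in S}\alloc_j$---not to certify any LP rounding step.
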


\begin{proof}
Pick any agent $i$ and set of agents $S$.
If $i \in S$, then
\[
 \vali(\alloci) - \pricei = 0 \geq \vali(\cup_{j \in S} \alloc_j) - \sum_{j \in S} \val_j(\alloc_j) = \vali(\cup_{j \in S} \alloc_j) - \sum_{j \in S} \price_j,
\]
where the inequality follows from the efficiency of allocation $\allocs$.
Consider next the case where $i \not\in S$.
It holds that
\[
 \vali(\alloci) + \sum_{j \in S} \val_j(\alloc_j) \geq \vali(\cup_{j \in S \cup \{i\}} \alloc_j) \geq \vali(\alloci) + \vali(\cup_{j \in S} \alloc_j),
\]
where the first inequality follows from the efficiency of allocation $\allocs$, and the second inequality follows from the super additivity of $\vali$. It follows that
\[
\vali(\alloci) - \pricei = 0 \geq \vali(\cup_{j \in S} \alloc_j) - \sum_{j \in S} \val_j(\alloc_j) = \vali(\cup_{j \in S} \alloc_j) - \sum_{j \in S} \price_j.
\]
The assertion of the theorem follows.
\end{proof}

A consequence of Theorem \ref{thm.superadd} is that the full surplus can always be extracted by the seller as revenue. 


We note that the use of bundling is necessary for the statement of Theorem \ref{thm.superadd}, even if we relax the market-clearing requirement of Walrasian equilibrium and even for single-minded bidders.  There are known auction instances for which, given any set of item prices for which agent demand sets are disjoint, the social welfare of the resulting allocation is an $O(\sqrt{m})$ fraction of the optimal social welfare \cite{FGL-13}.  For completeness we describe such an example in Appendix \ref{app.bundling.necessary}.

\subsection{Polynomial-Time Mechanisms}

We next study the power of poly-time approximation mechanisms for maximizing social welfare in \acron{} outcomes (compared to the optimal welfare that can be achieved by any mechanism, poly-time or not, \acron{} or not).
Of particular interest is the question whether the \acron{} requirement entails an additional loss on top of the loss incurred due to the poly-time requirement alone.
In our analysis, we distinguish between mechanisms that operate in the value-query and demand-query models, as is standard in the literature.
We find that in both models, it is possible to construct a \acron{} mechanism with an approximation factor matching that of the best-known approximation algorithms for welfare maximization.
In particular, there exists a poly-time \acron{} mechanism that achieves an $O(\sqrt{m})$ approximation under the demand-query model, and a poly-time \acron{} mechanism that achieves an $O(m/\sqrt{\log m})$ approximation under the value-query model.

We first present a \acron{} approximation mechanism for superadditive valuations using demand queries.  This mechanism, which we call SuperAdditive\acron{} is listed as Algorithm~\ref{super-additive-CWE}.
proceeds in two phases.
In the first phase, it builds a preliminary solution by repeatedly allocating the set that maximizes \emph{value density}.  That is, set $S$ is allocated to agent $i$ so that $\vali(S)/|S|$ is maximized, and this process is then iterated on the remaining items.  A bidder can be allocated to multiple times in this phase, in which case she is allocated the union of the assigned sets.  After all objects have been allocated in this manner, we check whether the welfare can be improved by allocating all objects to a single player; if so, we do so and the mechanism ends.  Otherwise we proceed to phase $2$, where we repeatedly apply local improvements to the allocation.  Specifically, if we write $(x_1, \dotsc, x_n)$ for the tentative allocation, we look for circumstances in which some player $i$ has more value for a set of bundles from among $\{x_1, \dotsc, x_n\}$ than the players to whom those bundles were previously assigned; if such a case exists, we bundle all of these items together and reallocate them to player $i$, then repeat the process with this updated tentative allocation.  Note that this step amounts to repeatedly satisfying the demand of a player in the market with items $\{x_1, \dotsc, x_n\}$ and prices $p_i = v_i(x_i)$, until no further demands are made (which must occur since these prices only increase).  When this process terminates we return the resulting allocation. The following theorem establishes the $O(\sqrt{m})$ approximation and polynomial run time of the algorithm. 

\begin{AvoidOverfullParagraph}
\begin{theorem}
\label{thm.super-additive-alg}
Algorithm SuperAdditive\acron{}
returns a \acron{} outcome that $O(\sqrt{m})$-approximates the optimal social welfare over all assignments.  Furthermore, it can be implemented in a polynomial number of demand queries.
\end{theorem}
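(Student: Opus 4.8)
The plan is to verify three things: (i) the algorithm terminates and runs in polynomial time using only demand queries; (ii) the returned outcome is a \acron{}; and (iii) the allocation achieves an $O(\sqrt{m})$ fraction of the optimal welfare. I would handle (ii) essentially for free: once phase~2 terminates with tentative allocation $(\alloc_1,\dotsc,\alloc_n)$ and prices $\price_i = \vali(\alloci)$, no agent demands any union of the bundles $\{\alloc_1,\dotsc,\alloc_n\}$ at these prices (that is the termination condition), so $(\allocs,\prices)$ is buyer stable in $\auction_{\partition(\allocs)}$; and it is trivially seller stable because in a bundle-efficient-style assignment every bundle is allocated, so $\alloc_0=\emptyset$ (or any leftover objects get bundled in / priced at $0$). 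Hence by definition it is a \acron{}. For termination of phase~2, I would use the observation already flagged in the algorithm description: each local improvement strictly increases the price $\price_i = \vali(\alloci)$ of the agent who receives the merged bundle while weakly increasing all other prices, and there are at most $m$ ``atoms'' so prices are bounded; more carefully, each step strictly decreases the number of distinct bundles in the partition, so phase~2 runs for at most $n$ iterations. Each iteration requires finding a demanded union of current bundles, which is a single demand query over the reduced market $\auction_{\partition}$ under the linear prices $\price$ — this is the same trick as in \cite{Feldman2013}, querying demand over bundles rather than over the original objects. Phase~1 makes at most $m$ allocation rounds, each of which needs, for each agent, the maximum-density set — implementable via a polynomial number of demand queries by the standard reduction (sweep a multiplier $\lambda$ and ask for the demand at uniform price $\lambda$; the value densities appearing as $\lambda$ varies are polynomially many). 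So the total is a polynomial number of demand queries.

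For the welfare bound (iii) — the main obstacle — I would compare the value of the returned allocation to $\OPT$, the optimal (unconstrained) social welfare. Fix an optimal allocation $\allocs^{*}$. The key case split is on whether $\OPT$ is ``concentrated'' in a single large bundle or ``spread out.'' Let $V$ be the largest single-agent value $\max_i \vali(\items)$; the phase-1.5 check (allocate all items to one player if that helps) guarantees the output has value at least $V$. On the other hand, consider the density-greedy allocation from phase~1. A standard analysis of density-greedy for packing-type problems shows: at the moment greedy allocates set $S$ to agent $i$ with density $\vali(S)/|S| = \rho$, every object still unallocated is ``charged'' density at least... — more precisely, one shows that the greedy allocation $\allocs^{g}$ satisfies $\sum_i \vali(\alloc^{g}_i) \ge \frac{1}{\sqrt m}\bigl(\OPT - V\bigr)$ or something of that form, by partitioning the optimal sets into ``small'' sets (size $\le \sqrt m$) and ``large'' sets (size $> \sqrt m$, of which there are $< \sqrt m$ many), bounding the contribution of large optimal sets by $\sqrt m \cdot V$ and the contribution of small optimal sets by $\sqrt m$ times the total greedy value via a density-domination argument. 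Finally, phase~2 only weakly increases total welfare, since every local improvement replaces the sum of several agents' current values by a single larger value, i.e. $\sum_i \vali(\alloci)$ is non-decreasing across phase~2 (this is where super-additivity is used, mirroring the proof of Theorem~\ref{thm.superadd}). Combining, the output value is at least $\max\{V,\ \tfrac{1}{\sqrt m}(\OPT - V)\} \ge \tfrac{1}{2\sqrt m}\OPT$ (balancing the two bounds), giving the claimed $O(\sqrt m)$ approximation.

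The step I expect to be genuinely delicate is the density-greedy welfare estimate against $\OPT - V$: one has to be careful that super-additive (hence possibly complement-heavy) valuations don't break the density-domination inequality, and that reallocating a previously-allocated agent the \emph{union} of her assigned sets (as the algorithm does) never decreases the tracked welfare — again leaning on super-additivity. I would isolate that estimate as a lemma and prove it by the large-set/small-set partition described above, then assemble the three pieces.
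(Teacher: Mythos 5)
Your argument for the \acron{} property of the output is correct and matches the paper's: at termination no agent demands any union of current bundles at prices $p_i = v_i(x_i)$, and since Phase~1 allocates every object, seller stability is trivial. Your observation that Phase~2 is welfare-monotone under super-additivity is also correct (the paper only invokes this explicitly in the single-minded case, but it holds generally and is a nice simplification).

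However, your termination argument for Phase~2 has a genuine gap. You claim ``each step strictly decreases the number of distinct bundles in the partition, so Phase~2 runs for at most $n$ iterations.'' This is false in the case $|T| = 1$ and $x_i = \emptyset$: then $x_j$ is simply transferred to $i$ and $x_j$ becomes $\emptyset$, so the multiset of non-empty parts is unchanged. A bundle can in principle be passed through a chain of agents (each with strictly larger value for it) without any merging occurring. The paper's proof explicitly isolates this case: it tracks the number of agents with empty allocations, shows this is non-decreasing and strictly increases except precisely when $|T|=1$ and $x_i = \emptyset$, and then argues that within any maximal stretch of such ``no-increase'' iterations the argmax rule prevents a given agent $i$ from being chosen twice, giving an $n^2$ overall bound rather than $n$. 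Your fallback observation that prices are bounded and increasing does not by itself yield a polynomial bound (increments could be tiny); this piece of the proof needs the paper's counting argument or an equivalent one.

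On the welfare bound you take a genuinely different route. You propose re-deriving the $O(\sqrt m)$ bound directly by a large-set/small-set split of $\OPT$ and a density-domination argument for Phase~1, together with the monotonicity of Phase~2, and you correctly flag the density-domination step as the delicate part (super-additivity means the greedy charging has to be done carefully). The paper avoids this altogether: it observes that Phase~1 coincides with the Blumrosen--Nisan mechanism when bidders are single-minded, shows Phase~2 can only help in that case, and then \emph{reduces} the general super-additive case to the single-minded one by replacing each bidder $i$ with $2^m$ single-minded virtual bidders $(i,S)$ and arguing, via super-additivity, that the real mechanism does at least as well on the original instance as on the expanded one. The paper's reduction imports the known BN analysis wholesale and sidesteps the charging lemma you would need to prove from scratch; your direct approach is morally equivalent but would require you to actually carry out the charging argument you currently only sketch.
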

\end{AvoidOverfullParagraph}

\begin{algorithm}
\caption{SuperAdditive\acron{}} \label{super-additive-CWE}
\begin{algorithmic}[1]
\REQUIRE Additive valuations $v_1, \dotsc, v_n$.
	\STATE Initialize $R \leftarrow M$, $x_i \leftarrow \emptyset$ for each $i \in N$
\\ \vspace{2mm} \emph{// Phase 1: Build initial allocation}
	\REPEAT
		\STATE $(i,S) \leftarrow \argmax_{i \in [n], S \subseteq R} \{ \vali(S) / |S| \}$
		\STATE $x_i \leftarrow x_i \cup S$, $R \leftarrow R \backslash S$;
	\UNTIL{$R = \emptyset$}
	\STATE {\bf if} $\exists i \in [n]$ such that $\vali(M) > \sum_j \val_j(x_j)$ {\bf then} $x_i \leftarrow M$ and $x_j \leftarrow \emptyset$ for all $j \neq i$;
\\ \vspace{2mm} \emph{// Phase 2: Local search}
	\WHILE{$\exists i \in [n], T \subseteq N$ such that $\vali(\cup_{j \in T} x_j) > \sum_{j \in T} \val_j(x_j)$}
		\STATE Choose $i\in [n]$ and $T \subseteq N$ maximizing $\vali(\cup_{j \in T} x_j) - \sum_{j \in T} \val_j(x_j)$.
		\STATE $x_i \leftarrow x_i \cup (\cup_{j \in T}x_j)$, \ $x_j \leftarrow \emptyset$ for all $j \in T \backslash \{i\}$
	\ENDWHILE
	\RETURN $\allocs$
\end{algorithmic}
\end{algorithm}

To prove Theorem \ref{thm.super-additive-alg}, we first note that the loop on lines 11-14 of Algorithm \ref{super-additive-CWE} must halt after polynomially many iterations.  The reason is that, on each interation of the loop, the number of players $j$ such that $x_j = \emptyset$ increases by at least one, except in the case where $|T| = 1$ and $x_i = \emptyset$.  Consider a sequence of iterations of the loop during which the number of players with empty allocations does not increase. In each iteration, since we choose the $i$ and $T = \{j\}$ that maximizes $\vali(x_j) - \val_j(x_j)$, a certain player $i$ can be chosen only once during this sequence.  This sequence of iterations therefore has length at most $n$.  Since the number of players with empty allocations can increase at most $n$ times, we conclude that this loop must halt after at most $n^2$ iterations.

We next analyze the number of demand queries needed to implement Algorithm \ref{super-additive-CWE}.

\begin{lemma}
Algorithm \ref{super-additive-CWE} can be implemented in a polynomial number of demand queries.
\end{lemma}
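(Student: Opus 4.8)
The plan is to go phase by phase through Algorithm~\ref{super-additive-CWE} and count the number of demand queries used, bearing in mind that a demand query returns a utility-maximizing bundle given a price vector over the \emph{current} set of indivisible objects (which, crucially, are bundles of the original items, not the items themselves). First I would handle Phase~1. The key observation is that a single step of ``find $(i,S)$ maximizing $v_i(S)/|S|$ over $S \subseteq R$'' can be implemented with a polynomial number of demand queries per agent: for each agent $i$ and each candidate size $s \in \{1,\dots,|R|\}$, setting a uniform price $p$ on all items in $R$, a demand query returns a utility-maximizing set; by a standard argument (sweeping the common price $p$ and noting that the demanded set size is monotone nonincreasing in $p$, with at most $|R|$ breakpoints) one recovers, for each target size, the maximum-value set of that size, hence the maximum value density. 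This costs $O(n \cdot m)$ demand queries per iteration, and Phase~1 runs for at most $m$ iterations (each iteration removes at least one item from $R$), for a total of $O(n m^2)$ queries. The single-line check ``$\exists i$ with $v_i(M) > \sum_j v_j(x_j)$'' costs $n$ value queries, which are simulatable by a demand query at price zero.

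Next I would handle Phase~2. Here the paper has already told us (in the paragraph preceding the lemma, and again inside the algorithm's description) that each iteration of the while loop is exactly a demand query in the reduced market whose objects are the current tentative bundles $\{x_1,\dots,x_n\}$ with prices $p_j = v_j(x_j)$: finding the $i$ and $T$ maximizing $v_i(\cup_{j\in T} x_j) - \sum_{j\in T} v_j(x_j)$ is precisely asking agent $i$ for her demanded bundle-of-bundles at those prices, and then taking the best such $i$. So each iteration costs $n$ demand queries (one per agent, in the bundled market). We have already argued (the paragraph before the lemma) that the loop halts within $n^2$ iterations, giving $O(n^3)$ demand queries for Phase~2. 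Summing, the whole algorithm uses $O(n m^2 + n^3)$ demand queries, which is polynomial.

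I expect the main obstacle to be the Phase~1 step: showing that ``maximize value density over subsets of $R$'' is realizable through demand queries rather than through value queries. This requires the argument that, under uniform item prices on $R$, a demand query with common price $p$ returns a set that, for \emph{some} cardinality, is the maximum-value set of that cardinality, and that sweeping $p$ over the $O(|R|)$ relevant thresholds (the breakpoints where the optimal demanded cardinality changes) exposes, for every cardinality $s$, the max-value set of size $s$ and hence the quantity $\max_S v_i(S)/|S|$. I would need to be slightly careful about ties in the demand correspondence and about the fact that the algorithm is actually invoked with additive valuations (per the \texttt{\textbackslash REQUIRE} line), for which this sweep is especially clean since $v_i(S) = \sum_{a\in S} v_i(\{a\})$ and the demanded set at uniform price $p$ is exactly $\{a \in R : v_i(\{a\}) \ge p\}$; then the max-density set is obtained by sorting items by per-item value, which corresponds to $O(|R|)$ distinct price thresholds. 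The remaining steps (Phase~2 as a demand query in the bundled market, the iteration bound, simulating value queries by a zero-price demand query) are routine and follow directly from material already in the excerpt.
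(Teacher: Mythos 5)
The Phase~2 part of your proposal matches the paper's argument exactly: treat $\{x_1,\dotsc,x_n\}$ as indivisible objects in a reduced auction, set prices $p_j = v_j(x_j)$, pose one demand query per agent in that reduced auction, and take the agent/set pair with the largest positive demanded surplus to realize the argmax; if no surplus is positive, the while-condition is false. The iteration bound of $n^2$ is taken from the preceding paragraph, as you note. That part is fine.

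Your Phase~1 argument has a genuine gap. You claim that sweeping a uniform price $p$ over $R$ recovers, for each cardinality $s$, the max-value set of size $s$, because ``the demanded set size is monotone nonincreasing in $p$, with at most $|R|$ breakpoints.'' Monotonicity is true, but for super-additive valuations---which is the actual setting here; the \texttt{\textbackslash REQUIRE} line's ``Additive'' is a typo, since the theorem and section concern super-additive agents---the demanded cardinality can \emph{skip} values as $p$ decreases, so not every size $s$ is ever demanded. Concretely, with $v(\{a\})=v(\{b\})=1$ and $v(\{a,b\})=10$ (super-additive), the demand at any uniform price is either $\{a,b\}$ or $\emptyset$; size $1$ is never exposed, so your ``recover the max-value set of each size'' step fails. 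Retreating to the literal ``additive'' assumption is not a fix: in the additive case the max-density set is trivially a single item and the whole sweep apparatus is unnecessary, so that reading cannot be what the lemma is about. The paper's proof avoids all of this by observing that line~3 (with unit prices on $R$ and prohibitive prices off $R$) is precisely a \emph{relative-demand query} in the sense of Blumrosen and Nisan, and then citing their result that a relative-demand query can be simulated by polynomially many ordinary demand queries. If you wanted to re-derive that reduction rather than cite it, the correct observation is that the max-density set $S^*$ \emph{is} a demanded set at the single critical uniform price $p^\ast = v(S^\ast)/|S^\ast|$ (since $v(S)-p^\ast|S|\le 0 = v(S^\ast)-p^\ast|S^\ast|$ for all $S$), and one must locate that threshold while handling ties with $\emptyset$---a different and more delicate argument than the ``all cardinalities'' claim you wrote.
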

\begin{proof}
It is clear that all operations other than lines 3, 11, and 12 can be implemented in polynomial time, so we will focus on those.  Line 3 can be implemented with $n$ \emph{Relative-demand query} in the sense of Blumrosen and Nisan \cite{Blumrosen2005}.  A relative demand query involves generating a vector $\prices$ of non-zero object prices, and returns the set $S$ maximizing $v_i(S) / \sum_{i \in S}p_i$.  Line $3$ can be implemented by posing a demand query to each bidder with the price vector that assigns a price of $1$ to each item in $R$ and an arbitrarily large price to each item in $M \backslash R$.  Since a relative demand query can be implemented by a polynomial number of demand queries \cite{Blumrosen2005}, Line 3 can be implemented in polynomially many demand queries as well.

Lines 11 and 12 can be implemented as follows.  We focus on line 12; line 11 can be implemented in the same way.  We will define a new auction instance, over the original player set $N$ and a new set of objects $\{a_1', \dotsc, a_n'\}$.  Object $a_i'$ is intended to represent the set $x_i$ in the original auction.  Given some $T \subseteq [n]$, the value of player $i$ for some object set $x' = \{ a_j' : j \in T \}$ will be defined as $\vali'(x') := \vali( \cup_{j \in T} x_j )$.  Note that this new auction corresponds precisely to the original auction with the objects partitioned into indivisible bundles $x_1, \dotsc, x_n$.  We can then implement line 12 by making a single demand query to each agent $i$, \emph{with respect to this new bundled auction}, under prices given by $p_j' = \val_j(x_j)$.  If any player has positive value for the demanded set returned by this query, then this player and his demanded set will satisfy the condition on line 12.  Whichever player has the highest positive value, plus his demanded set, will be one chosen on line 12.  If every player has value $0$ for his demanded set, then the condition on line 9 evaluates to false.
\end{proof}

To show correctness, we first show that SuperAdditive\acron{} generates a \acron{} outcome, then bound the social welfare of the outcome it returns.

\begin{claim}
Algorithm SuperAdditive\acron{} generates a \acron{} outcome.
\end{claim}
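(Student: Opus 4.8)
The goal is to show that the allocation $\allocs = (x_0, x_1, \dotsc, x_n)$ returned by SuperAdditive\acron{} is a \acron{} outcome, i.e., that it admits a supporting price vector making $(\allocs,\prices)$ a Walrasian equilibrium over the reduced market $\auction_{\partition(\allocs)}$ whose items are the bundles $x_1, \dotsc, x_n$. The natural candidate, following Theorem~\ref{thm.superadd} and the local-search interpretation described in the text, is to take $\pricei = \vali(\alloci)$ for each nonempty bundle $\alloci$ (and, since Phase~1 allocates all of $M$ and subsequent steps only re-bundle, there are no unallocated items, so seller stability/market clearance is immediate: $x_0 = \emptyset$). It therefore suffices to verify buyer stability: for every agent $i$ and every subset $T \subseteq [n]$ of bundles, $\vali(\alloci) - \pricei \geq \vali(\cup_{j \in T} \alloc_j) - \sum_{j \in T} \pricej$.

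First I would dispense with the case analysis the way Theorem~\ref{thm.superadd}'s proof does, but here the key point is that we do \emph{not} get to assume $\allocs$ is bundle-efficient — Phase~2 only guarantees a \emph{local} optimality condition. The plan is to observe that the termination condition of the while loop on lines~11--14 is exactly the statement we want: the loop exits precisely when there is no $i \in [n]$ and $T \subseteq N$ with $\vali(\cup_{j \in T} x_j) > \sum_{j \in T} \val_j(x_j)$. With $p_j = v_j(x_j)$, this says $\vali(\cup_{j \in T} x_j) - \sum_{j \in T} p_j \leq 0 = \vali(\alloci) - \pricei$ for every $T$ containing $i$ (using $i \in T$ so that the term $v_i(x_i) = p_i$ appears on the right and cancels), which is buyer stability restricted to demanded sets that include $i$'s own bundle. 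For a set $T$ with $i \notin T$, I would add $i$'s bundle to $T$ and use super-additivity: $\vali(x_i \cup \cup_{j\in T} x_j) \geq \vali(x_i) + \vali(\cup_{j \in T} x_j)$, so that $\vali(\cup_{j\in T} x_j) - \sum_{j\in T} p_j \leq \vali(x_i \cup \cup_{j\in T} x_j) - \vali(x_i) - \sum_{j \in T} p_j \leq (\vali(x_i) + \sum_{j\in T} p_j) - \vali(x_i) - \sum_{j\in T} p_j = 0$, where the last inequality is again the termination condition applied to $T \cup \{i\}$. This yields $\utili(\cup_{j\in T} x_j, \prices) \leq 0 = \utili(x_i,\prices)$, establishing that $x_i$ is in $i$'s demand set in the reduced market.

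The one subtlety to check carefully is the branch on line~9 where the mechanism may have set $x_i \leftarrow M$ for a single player: I would note that this only happens when the loop is entered after this reassignment, so the final allocation still satisfies the loop's exit condition; alternatively, if the loop body is never triggered after line~9, one must confirm the single-bundle allocation is itself \acron{}, which is trivial since with one bundle priced at $v_i(M)$ the only alternative set for any agent is $\emptyset$ (utility $0$) or the whole bundle (utility $v_j(M) - v_i(M) \le v_j(M) - v_j(M) \le 0$ by line~9's condition failing, i.e. $v_j(M) \le \sum_k v_k(x_k) \le v_i(M)$... more carefully, line~9 having fired means $v_i(M) \ge v_j(M)$ for the chosen $i$ only after it is also $> \sum_k v_k(x_k)$, so I'd argue directly that line~9 picks the maximizing $i$, hence $v_j(M) \le v_i(M)$ for all $j$). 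I expect the main obstacle is purely bookkeeping: making sure the stated price vector is well-defined on exactly the nonempty bundles, that market clearance holds because no objects are ever discarded, and that the loop's exit condition is invoked with the correct choice of $T$ (namely $T \ni i$) in both the $i \in T$ and $i \notin T$ cases. No deep argument is needed beyond super-additivity plus the loop invariant; the proof is essentially a restatement of the Phase~2 stopping rule as the Walrasian demand condition.
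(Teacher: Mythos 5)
Your approach is the same as the paper's: set $\pricei = \vali(\alloci)$ and observe that the Phase~2 stopping rule is exactly the buyer-stability inequality. However, you have introduced an unnecessary case distinction and an unnecessary use of super-additivity. The loop on lines~11--14 exits only when, for \emph{all} $i \in [n]$ and \emph{all} $T \subseteq N$ (including $T$ with $i \notin T$), $\vali(\cup_{j \in T} x_j) \leq \sum_{j \in T} \val_j(x_j)$; the quantification over $T$ is not restricted to sets containing $i$. So with $p_j = v_j(x_j)$ one immediately has $\vali(x_i) - p_i = 0 \geq \vali(\cup_{j \in T} x_j) - \sum_{j \in T} p_j$ for every $i$ and $T$, which is buyer stability in one line. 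Your detour through super-additivity for the $i \notin T$ case is correct but not needed (and is what distinguishes this claim's proof from that of Theorem~\ref{thm.superadd}, where super-additivity \emph{is} essential because bundle-efficiency alone doesn't give the full loop-invariant inequality). Likewise, the line~9 branch requires no special treatment: whatever the state of $\allocs$ when Phase~2 begins, the final allocation satisfies the loop's exit condition, which is all the argument uses. Your observations that $\alloc_0 = \emptyset$ and that prices need only be defined on nonempty bundles are fine, and in fact are bookkeeping points the paper's proof leaves implicit.
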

\begin{proof}
The outcome $\allocs$ generated by SuperAdditive\acron{} is such that, for all $i \in [n]$ and all $T \subseteq N$, $\vali(\cup_{j \in T} x_j) \leq \sum_{j \in T} \val_j(x_j)$.  Thus, if we set prices $\pricei = \vali(x_i)$, we have that $\vali(x_i) - p_i = 0 \geq \vali(\cup_{j \in T} x_j) - \sum_{j \in T} p_j$ for all $i$ and all $T \subseteq N$.  Thus outcome $(\allocs, \prices)$ is \acron{}.
\end{proof}

\begin{claim}
Algorithm SuperAdditive\acron{} obtains an $O(\sqrt{m})$ approximation to the optimal social welfare.
\end{claim}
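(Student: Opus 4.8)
The plan is to charge the optimal welfare against the two quantities that Algorithm~\ref{super-additive-CWE} is guaranteed to realize: the welfare $W_g$ of the value-density allocation built in Phase~1, and $W_1 := \max_i \vali(M)$, the value that the single-bidder fallback on line~9 can secure. The first step is to reduce the claim to a lower bound on the welfare of the allocation immediately after line~9. Every iteration of the Phase~2 loop strictly increases $\sum_j \val_j(x_j)$---that is precisely its entry condition---so the final welfare is at least the welfare at the end of Phase~1. Moreover, writing $W$ for the final welfare, I would argue $W \geq \max(W_g, W_1)$: if line~9 does not fire then no bidder values $M$ above $W_g$, so $W_g \geq W_1$ already; if it does fire, then some bidder ends up holding the grand bundle $M$, and applying the Phase~2 step with a singleton $T$ repeatedly moves $M$ to a higher-value bidder until it rests with one maximizing $\vali(M)$, so the final welfare is at least $W_1$ (and never drops below $W_g$).

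Next I would fix an optimal allocation $(O_1,\dots,O_n)$ and split the bidders with $O_j \neq \emptyset$ into \emph{large} ones, with $|O_j| \geq \sqrt m$, and \emph{small} ones, with $|O_j| < \sqrt m$. Since the $O_j$ are pairwise disjoint subsets of $M$, there are at most $\sqrt m$ large bidders, and each contributes $\val_j(O_j) \leq \val_j(M) \leq W_1 \leq W$ by monotonicity; hence the large bidders account for at most $\sqrt m \cdot W$ of the optimum. This is exactly where the line-9 fallback is needed: without it the greedy can miss a single very valuable grand bundle entirely.

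The heart of the argument is the small bidders, handled through the greedy rule. Let $S_1, \dots, S_k$ be the sets allocated in Phase~1, in the order chosen, with densities $d_\ell = \val_{i_\ell}(S_\ell)/|S_\ell|$, and let $R_\ell$ be the pool of remaining items just before $S_\ell$ is picked (so $\sum_\ell |S_\ell| = m$). For each bidder $j$ with $O_j \neq \emptyset$, let $\ell_j$ be the first step at which some item of $O_j$ is removed; this is well defined because Phase~1 eventually consumes every item. Then $O_j \subseteq R_{\ell_j}$, so the pair $(j, O_j)$ was among the candidates at step $\ell_j$, which gives $\val_j(O_j)/|O_j| \leq d_{\ell_j}$ and hence $\val_j(O_j) \leq \sqrt m\, d_{\ell_j}$ whenever $j$ is small. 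Since the $O_j$ are disjoint and each $O_j$ with $\ell_j = \ell$ meets $S_\ell$, at most $|S_\ell|$ small bidders share a given value $\ell_j = \ell$; summing, $\sum_{j \text{ small}} \val_j(O_j) \leq \sqrt m \sum_\ell d_\ell |S_\ell|$. Finally, super-additivity of each $\vali$ gives $\val_j(x_j) \geq \sum_{\ell : i_\ell = j} \val_{i_\ell}(S_\ell) = \sum_{\ell : i_\ell = j} d_\ell |S_\ell|$, so $\sum_\ell d_\ell |S_\ell| \leq W_g \leq W$, and the small bidders account for at most $\sqrt m \cdot W$. Adding the two parts yields $\OPT \leq 2\sqrt m \cdot W$, i.e.\ $W \geq \OPT/(2\sqrt m)$, the claimed $O(\sqrt m)$ guarantee.

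I expect the main obstacles to be bookkeeping rather than any hard inequality. The first is justifying $W \geq W_1$ carefully: line~9 is stated so as to hand $M$ to \emph{some} bidder valuing it above the current welfare rather than to the best one, so one must invoke Phase~2 to shuffle $M$ to $\argmax_i \vali(M)$. The second is the charging step for the small bidders---one has to check that $\ell_j$ is well defined, that $O_j \subseteq R_{\ell_j}$, and that disjointness of the $O_j$ genuinely caps the number of small bidders charged to each greedy pick $S_\ell$ by $|S_\ell|$---and to confirm that the single use of super-additivity (lower-bounding $W_g$ by $\sum_\ell d_\ell |S_\ell|$) is the only place the valuation structure beyond monotonicity enters.
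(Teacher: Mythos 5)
Your proof is correct, but it takes a genuinely different route from the paper's. The paper argues \emph{indirectly}: it shows the mechanism's output welfare dominates that of the Blumrosen--Nisan mechanism, which is known to be an $O(\sqrt m)$ approximation. To bridge the fact that SuperAdditive\acron{} can allocate repeatedly to a single agent while BN cannot, the paper first handles single-minded bidders (where Phase~1 coincides with BN and Phase~2 can only improve welfare) and then lifts to general super-additive valuations by comparing against an expanded auction with $2^m$ single-minded shadow bidders. Your proof is a \emph{direct} charging argument: you split OPT into large-demand bidders (at most $\sqrt m$ of them, each bounded by $W_1 = \max_i v_i(M)$) and small-demand bidders (charged to greedy picks via the relative-density inequality, with disjointness of the $O_j$'s capping the charge per pick at $|S_\ell|$, and super-additivity giving $\sum_\ell d_\ell |S_\ell| \le W_g$). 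The direct route is more self-contained, yields an explicit constant ($\text{OPT} \le 2\sqrt m \cdot W$), and makes transparent which structural facts are actually used — notably that super-additivity enters only once, in lower-bounding $W_g$. The paper's route is shorter on the page because it offloads the combinatorial work to a cited result, and it also highlights the design similarity between this mechanism and BN's. One small simplification available to you: rather than arguing that Phase~2 shuffles $M$ to the $\argmax_i v_i(M)$ after line~6 fires, note that at termination of Phase~2 the stopping condition with $T = N$ immediately gives $v_i(M) = v_i(\cup_j x_j) \le \sum_j v_j(x_j) = W$ for every $i$, hence $W \ge W_1$ unconditionally.
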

\begin{proof}
We will argue that, on any superadditive input profile, the social welfare generated by our mechanism is at least that of a mechanism proposed by Blumrosen and Nisan, which was shown to obtain an $O(\sqrt{m})$ approximation to the optimal social welfare \cite{Blumrosen2005}.  The mechanism due to Blumrosen and Nisan, which we will refer to as the BN mechanism, is identical to the first phase of our mechanism, except that it only allows a single set to be allocated to any single player in the loop on lines 2-5.

Consider first the case in which all agents are single-minded.  In this case, phase 1 of our mechanism is identical to the BN mechanism (since our mechanism can allocate to any agent at most once in lines 2-6, by single-mindedness).  Thus, after phase 1, the preliminary allocation generated by our mechanism has social welfare equal to that of the BN mechanism.  We then note that the social welfare of the allocation can only increase over each iteration of the loop at lines 11-15.  This is because, whenever the condition on line 11 evaluates to true, it must be that $x_i = \emptyset$ (again, due to single-mindedness), so the increase in value to agent $i$ is greater than the decrease in value to each agent $j \in T$.  We conclude that the social welfare of the final allocation is at least that of the allocation returned by the BN mechanism.

We now consider the general case of superadditive valuations.  We will compare our auction with a corresponding auction in which each original superadditive bidder is represented by $2^m$ single-minded bidders, one for each subset of $M$.  That is, for each $i \in [n]$ and $S \subseteq M$, we will have a bidder $(i,S)$ with a single-minded value of $\vali(S)$ for set $S$.  Note first that the optimal social welfare for this new auction is identical to that of the original auction.  Also, since this is an auction over single-minded bidders, our mechanism obtains at least as much social welfare as the BN auction for this new auction instance (as argued above).  We will now argue that our mechanism obtains at least as much welfare in the original auction as in the new auction instance; this will complete the proof.


The only difference in the mechanism's behavior on these two input instances is that, on line 4 or 13, we may allocate multiple times to a single agent in the original input instance, whereas in the expanded auction instance these allocations are to separate bidders.  Note, however, that if we allocate multiple sets $S_1, \dotsc, S_k$ to a player $i$ in the original instance, then we allocate precisely to players $(i, S_1), \dotsc, (i, S_k)$ in the expanded instance.  Since bidder valuations are superadditive, the value of allocating multiple disjoint sets to a single bidder cannot be less than the value of allocating these sets separately to single-minded bidders with the same valuations.  We conclude that the welfare generated by our mechanism in the original auction is at least that of the BN mechanism in the corresponding auction with single-minded types, and hence obtains an $O(\sqrt{m})$ approximation to social welfare as required.
\end{proof}

We next move to the value-query model. We show that the $O(m/\sqrt{\log m})$ approximation mechanism due to \cite{HKNS2004} is guaranteed to generate CWE outcomes.
We note that it nearly matches the lower bound of $O(m / \log m)$ on the approximability of CAs with superadditive bidders (using value queries) \cite{Mirrokni2008}.

\begin{theorem}
If agents are super-additive, then there exists a mechanism that makes a polynomial number of value queries and generates a CWE outcome that achieves a $O(m/\sqrt{\log m})$ approximation to the optimal social welfare.
\end{theorem}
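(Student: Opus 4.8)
The plan is to show that the partition-and-allocation produced by the mechanism of \cite{HKNS2004} is, essentially by definition, an \acron{} outcome --- in fact an \acron{}, which is strictly stronger than the CWE claimed in the statement --- and then to import the welfare bound and the query-complexity bound from \cite{HKNS2004} verbatim. Recall that that mechanism outputs a partition $\partition$ of $\items$ together with linear per-bundle prices and an allocation $\allocs$ of the bundles of $\partition$ to the agents that forms a Walrasian equilibrium of the bundled auction $\auction_{\partition}$ (a \emph{bundling equilibrium} in the terminology of \cite{HKNS2004}). First I would match this to our definitions: a WE of $\auction_{\partition}$ is, by the first-welfare-theorem statement recalled in Section~\ref{sec.prelim}, efficient, so $\allocs$ maximizes social welfare among \emph{all} allocations of the bundles of $\partition$ to the agents.

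Next I would observe that $\allocs$ is therefore \emph{bundle-efficient} in the sense of Section~\ref{sec:super-add}: every way of reassigning the coarser bundles $\alloc_1,\dotsc,\alloc_n$ of $\allocs$ among the agents is in particular an allocation of the bundles of $\partition$ (each $\alloci$ being a union of $\partition$-bundles), so no such reassignment beats $\allocs$ itself, which is exactly bundle-efficiency. By Theorem~\ref{thm.superadd}, the price vector $\pricei=\vali(\alloci)$ then makes $(\allocs,\prices)$ an \acron{}, hence also a CWE; as noted in Section~\ref{sec:super-add} this outcome additionally extracts the full reported welfare as revenue. The quantitative part is then inherited from \cite{HKNS2004}: their main result produces a bundling equilibrium of welfare $\Omega(\sqrt{\log m}/m)\cdot\OPT$ using polynomially many value queries, and since super-additive valuations are a subclass of general valuations the guarantee only carries over (if the construction is randomized, every realization is a bundling equilibrium, so one obtains a randomized \acron{} mechanism, or one derandomizes as in \cite{HKNS2004}).

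The step I expect to be the main --- and only --- obstacle is not mathematical but a bookkeeping/definitional check: confirming that the market-clearance convention of our \acron{} definition (every unallocated object priced at $0$) coincides with the bundling-equilibrium notion of \cite{HKNS2004}, and that their procedure delivers the partition, the prices, and the allocation \emph{together}, so that no additional, query-expensive post-processing is needed before invoking Theorem~\ref{thm.superadd}. Granting this, the theorem is immediate. (Should the partition returned by \cite{HKNS2004} be strictly finer than $\partition(\allocs)$, one coarsens it by merging, inside each $\alloci$, the $\partition$-bundles it contains and summing their prices; buyer stability and market clearance are preserved, since no agent's utility for any union of coarsened bundles increases and every unallocated bundle keeps price $0$, so the outcome is still an \acron{}.)
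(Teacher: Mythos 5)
Your proof takes the same route as the paper: bundle into $\log m$ groups, observe the resulting allocation is bundle-efficient in the sense of Section~\ref{sec:super-add}, and invoke Theorem~\ref{thm.superadd} with prices $\pricei=\vali(\alloci)$ to certify \acron{}. The paper's proof is exactly this, stated more tersely.

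One factual slip is worth flagging even though it does not break the argument: the mechanism of \cite{HKNS2004} does \emph{not} output linear per-bundle prices or a Walrasian equilibrium of the bundled auction. Their notion of ``bundling equilibrium'' is a partition for which VCG restricted to unions of bundles is an ex-post equilibrium --- a statement about incentives under VCG payments, not about Walrasian prices --- and in fact a WE need not exist over $\log m$ arbitrary bundles when bidders are super-additive. What \cite{HKNS2004} does supply is the bundle-efficient allocation over the $\log m$ coarse bundles together with the $O(m/\sqrt{\log m})$ welfare bound and the polynomial value-query implementation (the number of coarse bundles is $\log m$, so one can enumerate and DP over all $2^{\log m}=m$ subsets of bundles). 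So your appeal to the first welfare theorem is an unnecessary detour to a conclusion (optimality over bundle allocations) that \cite{HKNS2004} hands you directly. Once that is corrected, the rest --- bundle-efficiency of the induced coarser partition, Theorem~\ref{thm.superadd}, and the welfare/query bounds --- goes through exactly as you wrote. Your parenthetical about coarsening a strictly finer partition and ``summing their prices'' inherits the same confusion (there are no inherited prices), but it is moot since the prices you end up using are $\pricei=\vali(\alloci)$ from Theorem~\ref{thm.superadd}.
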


\begin{proof}
The mechanism groups the objects into $\log m$ bundles, each of size $m/\log{m}$, arbitrarily.  It then returns the bundle-efficient allocation over those bundles.  This is known to achieve a $O(m/\sqrt{\log m})$ approximation, and can be implemented in a polynomial number of value queries \cite{HKNS2004}.  Since the allocation is bundle-efficient, it is CWE.
\end{proof}

\subsection{Single-minded Valuations and Value Queries}
\label{sec.single.minded}
In the special case in which agents are single-minded, Algorithm \ref{super-additive-CWE} can be improved to run in a polynomial number of \emph{value} queries, obtaining an $O(\sqrt{m})$ approximation to the optimal welfare.
This mechanism is new; as far as we are aware, existing $O(\sqrt{m})$-approximation mechanisms do not satisfy \acron{}.

%
Our algorithm, which we call SingleMinded\acron{} and listed as Algorithm~\ref{single-minded-CWE}, proceeds as follows.
We split the bidders into two groups: those with ``small'' desired sets (of size at most $\sqrt{m}$) and those with larger desired sets.  We first generate a provisional allocation that only includes the bidders with small desired sets.  We construct this preliminary allocation greedily: we order players from largest value to smallest, then allocate to players in this order if their desired set is available.  Any object that is left unallocated is then added to an arbitrary \emph{non-empty} allocation.  Then, in the second phase of the algorithm, we consider those bidders with large desired sets.  We order these large-set bidders from highest value to smallest, and for each bidder $i$ in this order, say with value $v_i$ for set $S_i$, we ask whether $v_i$ is greater than the sum of values of all players whose allocations intersect set $S_i$.  If so, we take all of those intersecting allocations from their respective bidders and allocate them all to player $i$.  After this operation has been completed for every large-set bidder (in order from highest value to smallest), we return the resulting allocation.
The following theorem establishes the $O(\sqrt{m})$ approximation and polynomial run time of the algorithm. 

\begin{algorithm}
\caption{SingleMinded\acron{}} \label{single-minded-CWE}
\begin{algorithmic}[1]
\REQUIRE Single-minded valuations $(v_1, S_1), \dotsc, (v_n, S_n)$.
\\ \vspace{2mm} \emph{// Phase 1: Players with small desired sets}
	\STATE $Q \leftarrow \{ i \in N : |S_i| \leq \sqrt{m} \}$
	\STATE $\allocs \leftarrow $ outcome of greedily allocating sets to players in $Q$, by value.
	\FOR{each object $a_i$ not allocated in $\allocs$}
		\STATE add $a_i$ to an arbitrary non-empty element of $\allocs$
	\ENDFOR
\\ \vspace{2mm} \emph{// Phase 2: Players with large desired sets}
	\FOR{$i \in N - Q$, in decreasing order of $v_i$}
		\STATE $C_i \leftarrow \{ j : x_j \cap S_i \neq \emptyset \}$
		\STATE {\bf if} $v_i > \sum_{j \in C_i} v_j$ {\bf then} $x_i \leftarrow \cup_{j \in C_i} x_j$, \ $x_j \leftarrow \emptyset$ for each $j \in C_i$;
	\ENDFOR
	\RETURN $\allocs$
\end{algorithmic}
\end{algorithm}

\begin{theorem}
\label{thm.single.minded.cwe}
When agent valuations are single-minded,
Algorithm SingleMinded\acron{}
returns a \acron{} outcome that $O(\sqrt{m})$-approximates the optimal social welfare over all assignments.  Furthermore, it can be implemented in a polynomial number of value queries.
\end{theorem}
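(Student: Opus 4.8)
The plan is to establish, in order: (a) that the returned allocation $\allocs$ admits supporting prices (so it is \acron{}), (b) that its welfare is within an $O(\sqrt m)$ factor of the optimum over all allocations, and (c) that the procedure uses only polynomially many value queries. Point (c) is the easiest and I would dispatch it first: a single-minded valuation $(v_i,S_i)$ is determined by the $m+1$ answers $v_i(M)$ (which equals $v_i$) and $v_i(M\setminus\{a\})$ for $a\in M$ (which is $0$ iff $a\in S_i$), so with $O(nm)$ value queries the mechanism recovers every $(v_i,S_i)$ and thereafter runs as a polynomial-time computation — the value-greedy of phase~1, the leftover loop, computing each $C_i$, and the comparison on line~9 are all clearly polynomial.

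For (a) I would take the candidate prices $\pricei := v_i$ on each non-empty bundle $\alloci$ and verify directly that $(\allocs,\prices)$ is a Walrasian equilibrium of the reduced auction $\auction_{\partition(\allocs)}$ (equivalently, invoke Claim~\ref{claim.frac}). Let $W$ be the set of agents holding a non-empty bundle at termination. First observe that every such $\alloci$ satisfies $S_i\subseteq\alloci$: a phase-1 winner gets $S_i$ plus possibly leftover objects, and a phase-2 winner $i$ gets $\bigcup_{j\in C_i}\alloc_j\supseteq S_i$, since $\alloc_0=\emptyset$ forces every object of $S_i$ into some bundle, which then lies in $C_i$. Hence $\vali(\alloci)=v_i$ and agent $i$'s utility for $\alloci$ is $0$. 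For buyer stability it then suffices to show $\utili(B,\prices)\le 0$ for every sub-collection $B$ of reduced items: if $\bigcup B\not\supseteq S_i$ this is immediate, and otherwise it reduces to $\sum_{j:\alloc_j\in B} v_j\ge v_i$; for a winner $i$ this holds because (bundles being disjoint) $\alloci$ is the only bundle meeting $S_i$, so $\alloci\in B$, while for a loser $i$ it is exactly the Key Lemma below. Seller stability is immediate since the only priced items are the $\{\alloci:i\in W\}$, each held by its owner.

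The crux is the \emph{Key Lemma}: for every $j\notin W$, $\sum_{i\in W:\,\alloci\cap S_j\neq\emptyset} v_i\ \ge\ v_j$. I would prove it by tracking objects and exploiting that phase~2 processes large-set bidders in decreasing order of value: once a large-set bidder wins a bundle in phase~2 that bundle is never absorbed again (absorbing a surviving bundle would require strictly more value than it, impossible later in the order), so after any fixed time each object migrates at most once more, always into a bundle held at the end by some winner. Split on why $j$ lost: (i) $j$ is small and was rejected in phase~1 — then at $j$'s turn some $a\in S_j$ was held by an earlier (hence higher-value) small-set bidder, and following $a$ to its final owner $i$ gives $i$ with $\alloci\cap S_j\neq\emptyset$ and $v_i\ge v_j$; (ii) $j$ is small and was absorbed in phase~2 by some large-set winner $\ell$ — then $S_j\subseteq\alloc_j\subseteq\alloc_\ell$ and $v_\ell>\sum_{k\in C_\ell}v_k\ge v_j$; (iii) $j$ is large and its line-9 test failed — then $v_j\le\sum_{k\in C_j}v_k$ at that moment; mapping each $k\in C_j$ to the winner $\pi(k)$ that ends up with $k$'s object of $S_j$, one checks (using the at-most-one-further-migration property) that for each winner $i$ the set $\{k:\pi(k)=i\}$ is either $\{i\}$ or contained in the single batch $i$ absorbs, whose value is $<v_i$; summing, $\sum_{k\in C_j}v_k\le\sum_{i\in W:\,\alloci\cap S_j\neq\emptyset}v_i$. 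This object-tracking argument, and in particular justifying the at-most-one-further-migration claim carefully, is the step I expect to be the main obstacle.

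Finally, for (b) I would use the standard ``$\sqrt m$-split'' accounting. Fix an optimal allocation and split its welfare into the contributions of small-set ($|S_i|\le\sqrt m$) and large-set bidders, so $\mathcal{OPT}\le \mathcal{OPT}_{\mathrm{small}}+\mathcal{OPT}_{\mathrm{large}}$. At most $\sqrt m$ large-set desired sets are pairwise disjoint, so $\mathcal{OPT}_{\mathrm{large}}\le\sqrt m\cdot v^{*}$, where $v^{*}$ is the largest value among large-set bidders. A textbook charging argument for the value-greedy (each rejected small-set bidder of the optimum is charged to an earlier-processed greedy winner whose chosen set it meets, and each greedy winner $i$ receives at most $|S_i|\le\sqrt m$ charges since the charging sets are disjoint in the optimum) gives $\mathcal{OPT}_{\mathrm{small}}\le(\sqrt m+1)W_1$, where $W_1$ is the welfare after phase~1 (unchanged by the leftover loop). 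It remains to bound $W_1$ and $v^{*}$ by the final welfare $W_{\mathrm{fin}}$: each phase-2 step strictly increases welfare, since the gain $v_i$ exceeds the destroyed value $\sum_{j\in C_i}v_j$ (each absorbed non-empty bundle contains its owner's desired set), so $W_{\mathrm{fin}}\ge W_1$; and right after the highest-value large-set bidder is processed the welfare is $\ge v^{*}$ (it either wins its set or its test failed, in which case the bundles meeting its set already carried value $\ge v^{*}$), so $W_{\mathrm{fin}}\ge v^{*}$. Hence $\mathcal{OPT}\le(2\sqrt m+1)W_{\mathrm{fin}}$, giving the $O(\sqrt m)$ bound. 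The degenerate case of no small-set bidders (phase~1 empty, so the leftover loop is vacuous) I would handle separately by assigning all of $M$ to the highest-value bidder, which is trivially \acron{} and within $\sqrt m$ of the optimum.
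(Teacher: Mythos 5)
Your plan is correct and lands in the same place as the paper's proof: set prices $p_i = v_i$ on each non-empty final bundle, reduce buyer stability for losers to the inequality $\sum_{j:\,x_j\cap S_i\neq\emptyset} v_j \ge v_i$, and bound the final welfare against both the phase-1 greedy welfare and $\max_i v_i$. Where you diverge is in how that inequality is established. The paper observes that the quantity $\sum_{j\in C_i} v_j$ is monotone non-decreasing through phase~2 — each absorption by a player $\ell$ replaces a subset $T\subseteq C_i$ by $\{\ell\}$ with $v_\ell > \sum_{k\in T} v_k$ — and then simply propagates the inequality that held at the moment $i$ was last rejected. You instead unroll this into an object-tracking argument with an explicit map $\pi$ from the time-of-rejection winners to final winners, which forces you to first prove the structural fact that a phase-2 winner's bundle is never re-absorbed. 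Both work, and arguably your version is \emph{more} airtight: the paper's stated one-line justification for the monotonicity (``the value of the player holding each object can only increase'') is, on its own, too weak, because $\sum_{j\in C_i} v_j$ is a sum over distinct winners and bundle merges shrink $C_i$; the correct reason is exactly the subset-replacement calculation, which your $\pi$-charging makes explicit at the cost of several more moving parts. For the welfare bound the paper simply cites the known $O(\sqrt m)$ guarantee for $\max(\text{greedy-on-small-bids},\text{best single bid})$, whereas you re-derive it with the standard $\sqrt m$-split; that is equivalent and fine. One genuinely useful observation you add: the degenerate case $Q=\emptyset$, where phase~1 allocates nothing and the ``assign leftovers to a non-empty bundle'' step is vacuous, so the algorithm as literally written returns the empty allocation. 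Your patch (give all of $M$ to the highest-value bidder) is the right fix; the paper's analysis silently assumes this case away.
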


The runtime requirement of Theorem \ref{thm.single.minded.cwe} is immediate from the definition of Algorithm \ref{single-minded-CWE}.  We address the approximation and \acron{} requirements separately.

\begin{lemma}
Algorithm \ref{single-minded-CWE} obtains an $O(\sqrt{m})$ approximation when agents are single-minded.
\end{lemma}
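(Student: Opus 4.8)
The goal is to show that Algorithm SingleMinded\acron{} produces an allocation whose social welfare is within an $O(\sqrt m)$ factor of the optimum. The natural approach is to split the optimal welfare $\OPT$ into the contribution of small-set winners (those $i$ with $|S_i| \le \sqrt m$) and large-set winners (those with $|S_i| > \sqrt m$) in the optimal allocation, and bound each piece separately against the welfare of the algorithm's output, call it $\vals(\allocs)$.

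\emph{Large-set winners in $\OPT$.} Since the items in any large winning set number more than $\sqrt m$, at most $\sqrt m$ disjoint large sets can be simultaneously allocated, so the optimal welfare from large-set winners is at most $\sqrt m$ times the single largest value $v^{\max}$ over all bidders. The algorithm's output has welfare at least $v^{\max}$: indeed, if $v^{\max}$ belongs to a small-set bidder then Phase~1 greedily allocates to the highest-value available bidder, so that bidder (or someone displacing it with at least as much value) is in $\allocs$; if $v^{\max}$ belongs to a large-set bidder $i$, then when $i$ is processed in Phase~2 either $i$ gets allocated, or $v_i \le \sum_{j \in C_i} v_j$, in which case the currently-allocated bidders intersecting $S_i$ already account for welfare $\ge v_i$ — and this welfare is never subsequently destroyed (each later reallocation only moves mass to an even-higher-value bidder). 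So the large-set part of $\OPT$ is $O(\sqrt m)\cdot\vals(\allocs)$.

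\emph{Small-set winners in $\OPT$.} This is the part requiring the classical greedy argument, and it is where I expect the main work. Let $\OPT^{\text{small}}$ be the set of small winners in the optimal allocation. The standard charging argument for greedy-by-value on intersecting constraints: when the algorithm considers a bidder $k \in \OPT^{\text{small}}$ in Phase~1 and does not allocate $S_k$, it is because some earlier (hence higher- or equal-value) allocated bidder's set intersects $S_k$; charge $v_k$ to that blocker. Since $|S_k| \le \sqrt m$, a blocker $\ell$ can be charged by at most $|S_\ell|\cdot(\text{something})$ — more carefully, the number of optimal small sets intersecting a fixed allocated small set $S_\ell$ is at most $|S_\ell| \le \sqrt m$ (the optimal sets are pairwise disjoint, each uses at least one of the $\le\sqrt m$ items of $S_\ell$). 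Hence each Phase-1 allocated bidder absorbs at most $\sqrt m$ units of optimal-small value, each of value $\le$ its own, giving $\sum_{k\in\OPT^{\text{small}}} v_k \le (\sqrt m + 1)\cdot \vals(\allocs^{(1)})$ where $\allocs^{(1)}$ is the Phase-1 allocation. The remaining subtlety is that Phase~2 may reallocate some Phase-1 bundles; but as noted, Phase~2 only replaces a collection of bidders by a single bidder of strictly greater total value, so $\vals(\allocs) \ge \vals(\allocs^{(1)})$, and the bound is preserved. Adding the ``leftover objects dumped into an arbitrary nonempty bundle'' step changes nothing since single-minded values depend only on containment of the exact desired set, so appending extra items to a winner's bundle neither helps nor hurts.

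\emph{Combining.} $\OPT = \OPT^{\text{small}} + \OPT^{\text{large}} \le (\sqrt m + 1)\vals(\allocs) + \sqrt m\, v^{\max} \le O(\sqrt m)\,\vals(\allocs)$, which is the claim.

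The main obstacle is the interaction between the two phases: one must argue carefully that Phase~2's reallocations never decrease welfare and never invalidate the Phase-1 charging, and that a large-set bidder who ``fails'' his test in Phase~2 is failing precisely because enough welfare is already present and stays present. A minor point to get right is that the charging in Phase~1 must account for the fact that a single small blocker set of size up to $\sqrt m$ can block up to $\sqrt m$ disjoint optimal small sets — this is exactly where the $\sqrt m$ (rather than a constant) enters, and it is tight.
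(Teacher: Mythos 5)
Your proof is correct and follows essentially the same approach as the paper: both show the algorithm's welfare dominates each of (a) the greedy allocation over bidders with $|S_i| \le \sqrt{m}$, and (b) the maximum single-bidder value $\max_i v_i$, and conclude from the known $O(\sqrt{m})$-approximation of the better of these two benchmarks. The only difference is that you prove that benchmark bound from scratch via the decomposition $\OPT = \OPT^{\text{small}} + \OPT^{\text{large}}$ and the standard charging argument, whereas the paper simply cites Mu'alem--Nisan for it.
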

\begin{proof}
We first note that, upon each iteration of loop $6-12$, the social welfare of allocation $\allocs$ cannot decrease.  This is because, on each iteration, the allocation is altered precisely when the value of player $i$ for set $\cup_{j \in C_i} x_j$ is greater than the value being obtained from these objects in the pre-existing allocation.

After line $2$, the allocation $\allocs$ corresponds to the outcome of the greedy allocation among bids for sets of size at most $\sqrt{m}$.  Thus, since social welfare does not decrease from this point onward, the final allocation generates at least the welfare of this greedy outcome.

Next, suppose that $i$ is the player for which $v_i$ is maximized.  We will show that our algorithm generates social welfare at least $v_i$.  If $i \in Q$, then we will have $S_i \subseteq x_i$ after line $2$, so our algorithm's welfare must be at least $v_i$.  Otherwise, $i$ will be the first agent considered in the loop starting at line $3$.  If $v_i$ is larger than the value generated by our algorithm before this first iteration, then the condition on line $8$ must evaluate true and we will have $S_i \subseteq x_i$ after the first iteration of the loop.  Thus, again, our algorithm's welfare must therefore be at least $v_i$.  In either case, our mechanism obtains welfare at least $\max_i v_i$.

We therefore have that our mechanism generates at least as much welfare as an algorithm that chooses the welfare-maximizing outcome between (a) the greedy allocation among bids for sets of size at most $\sqrt{m}$, and (b) allocating all items to a single bidder.  However, it is known that such an allocation obtains an $O(\sqrt{m})$ approximation to the optimal social welfare \cite{MN-08}.  Since our mechanism generates at least this much welfare, it must be an $O(\sqrt{m})$ approximation as well.
\end{proof}

\begin{lemma}
Let $\allocs$ be the allocation returned by Algorithm \ref{single-minded-CWE}, and consider price vector $\pricei = \vali(\alloci)$.  Then $(\allocs, \prices)$ is \acron{}.
\end{lemma}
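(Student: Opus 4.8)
The plan is to check the definition of \acron{} directly for the bundled market $\auction_{\partition(\allocs)}$. Seller stability is immediate: after Phase~1 every object has been placed inside some $x_j$ with $j\in[n]$ (line~4), and Phase~2 only moves objects among the $x_j$'s, so $x_0=\emptyset$ and market clearance holds vacuously. For buyer stability, observe that under $\pricei=\vali(\alloci)$ agent $i$'s utility for her own bundle is $0$; since in the reduced market selecting a subcollection $\{x_j:j\in T\}$ of bundles yields the object set $\bigcup_{j\in T}x_j$ at price $\sum_{j\in T}\price_j$ (and $x_0=\emptyset$ lets us restrict to $T\subseteq[n]$), it suffices to show
\[
\vali\Big(\bigcup_{j\in T}x_j\Big)\ \le\ \sum_{j\in T}\val_j(x_j)\qquad\text{for every }i\in[n]\text{ and }T\subseteq[n].
\]
This is exactly the inequality (``no agent envies any union of bundles'') that invokes Claim~\ref{claim.frac}, so establishing it finishes the proof.

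First I would record two invariants that hold after every line of Algorithm~\ref{single-minded-CWE}. (i) The sets $x_0,\dots,x_n$ always form a partition of $\items$, and whenever $x_j\neq\emptyset$ we have $S_j\subseteq x_j$ (hence $\val_j(x_j)=v_j$); this is a routine induction --- the greedy step sets $x_j=S_j$, the padding step only enlarges bundles, and the Phase~2 absorption sets $x_i=\bigcup_{j\in C_i}x_j\supseteq S_i$ because $C_i$ collects \emph{every} bundle meeting $S_i$ and the bundles cover $\items$. (ii) For each fixed $S\subseteq\items$ the quantity $W(S):=\sum\{\,v_j : x_j\cap S\neq\emptyset\,\}$ is non-decreasing over the run. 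The greedy and padding steps only add objects to existing bundles, so they can only enlarge the support of $W(S)$; the one delicate case is a Phase~2 absorption by player $i$, where the bundles indexed by $A:=\{j\in C_i:x_j\cap S\neq\emptyset\}$ leave the support and are replaced by $x_i$ (which meets $S$ precisely when $A\neq\emptyset$). When $A\neq\emptyset$ the net change to $W(S)$ is $v_i-\sum_{j\in A}v_j$, which is positive because the line-8 test gives $v_i>\sum_{j\in C_i}v_j\ge\sum_{j\in A}v_j$.

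Now fix $i$ and $T$. If $S_i\not\subseteq\bigcup_{j\in T}x_j$ then $\vali(\bigcup_{j\in T}x_j)=0$ by single-mindedness and we are done. Otherwise, using that the $x_j$ are pairwise disjoint (invariant (i)), every bundle meeting $S_i$ must lie in $T$, i.e.\ $C_i:=\{j:x_j\cap S_i\neq\emptyset\}\subseteq T$; hence $\sum_{j\in T}\val_j(x_j)\ge\sum_{j\in C_i}\val_j(x_j)=\sum_{j\in C_i}v_j=W(S_i)$, where $\val_j(x_j)=v_j$ for each $j\in C_i$ because $x_j\neq\emptyset$. So it remains to prove $W(S_i)\ge v_i$ at termination, and by invariant (ii) it is enough to exhibit a single earlier moment at which $W(S_i)\ge v_i$. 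If $i\in Q$, consider the greedy step that examines $i$: either $i$ receives $S_i$, so $x_i\supseteq S_i$ puts $i$ itself into the support of $W(S_i)$; or $S_i$ was unavailable, so some object of $S_i$ already lay in the bundle of a previously-served (hence value-$\ge v_i$) player. Either way $W(S_i)\ge v_i$ at that moment. If $i\notin Q$, consider the Phase~2 iteration that processes $i$: if the line-8 test succeeds, afterwards $x_i\supseteq S_i$ and so $W(S_i)\ge v_i$; if it fails, the failure is literally the statement $v_i\le\sum_{j\in C_i}v_j=W(S_i)$ at that moment. This closes the argument.

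The step I expect to be the main obstacle is monotonicity invariant (ii) across the Phase~2 absorption: getting the bookkeeping right for which players enter and leave the support of $W(S)$, and verifying that the absorption condition $v_i>\sum_{j\in C_i}v_j$ is exactly strong enough to dominate the value $\sum_{j\in A}v_j$ that is removed. Everything else --- seller stability, the partition/coverage invariant, and the reduction $C_i\subseteq T$ --- is straightforward bookkeeping that follows from disjointness of the maintained bundles.
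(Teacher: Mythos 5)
Your proof is correct and follows essentially the same strategy as the paper's: reduce buyer stability to the inequality $v_i\bigl(\bigcup_{j\in T}x_j\bigr)\le\sum_{j\in T}v_j(x_j)$, establish that the ``coverage value'' $\sum_{j:x_j\cap S_i\ne\emptyset}v_j$ is non-decreasing throughout the run (because absorbed bundles are always replaced by a strictly more valuable one), and then verify it is at least $v_i$ at the single moment the algorithm first considers agent $i$. Your organization is in fact a mild cleanup of the paper's: by fixing the ``first time $i$ is considered'' as the certifying moment and letting monotonicity of $W(S_i)$ do the rest, you absorb what the paper treats as a separate third case ($x_i$ becomes non-empty and is later reclaimed). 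The only nit is the passing remark that the target inequality ``invokes Claim~\ref{claim.frac}'' --- it doesn't; you derive it directly from the definition of buyer stability under prices $p_j=v_j(x_j)$ together with $x_0=\emptyset$, so the reference to the LP characterization is extraneous (though harmless).
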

\begin{proof}
For a given fixed input profile, let $Q = \{i \in [n] : |S_i| \leq \sqrt{m} \}$ be the set of agents with ``small'' demand sets and let $\allocs$ be allocation chosen by the mechanism.  For each player $i$ such that $\alloci = \emptyset$, let $C_i = \{ j : x_j \cap S_i \neq \emptyset \}$.  It suffices to show that, for each $i$ such that $\alloci = \emptyset$, we have $\vali \leq \sum_{j \in C_i} v_j$.  (Since $S_i \subseteq x_i$ whenever $x_i \neq \emptyset$).

Observe that, after line 5, the allocation $x_j$ forms a partition of the set of objects.  We also note that, for agent $i$, the value of $\sum_{j \in C_i} v_j$ can only increase over each iteration of the loop on lines 4-8.  This is because, for each object $a_i$, the value of the player $j$ such that $a_i \in x_j$ can only increase (since the allocation only changes when a set of objects is transferred to a player with value at least that of all previous owners of those objects).

Let $i$ be some player such that $\alloci = \emptyset$.  Suppose $i \in Q$, and that $i$ did not receive his desired set from the greedy allocation on line $2$.  Then, since $\alloci = \emptyset$, it must be that $\vali \leq \max\{ j \in Q, j \neq i, S_j \cap S_i \neq \emptyset : v_j\}$ (from the definition of the greedy algorithm).  Since the values of winners only increase over the course of the mechanism, we must have $\vali \leq \max\{ j, x_j \cap S_i \neq \emptyset : v_j\}$ and hence $\vali \leq \sum_{j \in C_i} v_j$.

Next suppose that $i \not\in Q$, and the condition on line 6 does not evaluate to true when agent $i$ is considered on lines 4-8.  Then, at this point in the algorithm we have $\vali \leq \sum_{j \in C_i} v_j$, and since winner values only increase over the course of the mechanism this will be true at the completion of the mechanism.

Finally, suppose that $i$ is such that, on some iteration of the loop on lines 6-12, $\alloci$ is initially non-empty but is set to $\emptyset$ over the course of the iteration.  Then the condition on line $8$ evaluated true for some $j$ such that $x_i \cap S_j \neq \emptyset$, and hence after this iteration we have $v_i \leq v_j$ and $S_i \cap x_j \neq \emptyset$.  Since winner values only increase over the course of the mechanism, we must have $\vali \leq \sum_{j \in C_i} v_j$ at the completion of the mechanism, as required.
\end{proof}

\section{Uniform Budget-Additive Valuations}
\label{sec:budget-additive}

A budget-additive valuation is specified by budget $B$ and item values $v_j$, $j \in M$.  The value of set $S$ is then $v(S) = \min\{B, \sum_{j \in S} v_j\}$.
Note that the uniform-valuation case implies $v_i(j) \leq B_i$ for all $i \in [n], j\in M$.

The problem of maximizing social welfare with budget-additive valuations has been extensively studied in recent years from a computational perspective \cite{Chakrabarty2008,Srinivasan2008,Feige2006,Andelman04,Azar08}.
It is known to be APX-hard \cite{Chakrabarty2008}, and the best known approximation ratio is $4/3$ (achieved via iterative rounding or primal-dual algorithms).
This factor also matches the integrality gap of the corresponding linear program.
In this section we consider problem instances in which agents have \emph{uniform} budget-additive valuations \cite{Andelman04}, in the sense that for each object $j$ there is a value $v_j$ such that, for all $i$, $v_i(j) \in \{0 , v_j\}$.  In other words, each object has a fixed value $v_j$; each player with non-zero value for $j$ values it at $v_j$.
Here too, the best approximation known is $4/3$.

\subsection{Arbitrary budgets}

\begin{figure}[t]
\centering
\subfigure[An instance of uniform budget additive bidders that admits a gap in welfare due to \acron{}.]
{
\label{fig:sub1}
\begin{tikzpicture}[scale=1.2]

\tikzstyle{dot}=[circle,draw=black,fill=white,thin,inner sep=0pt,minimum size=4mm]
\tikzstyle{point}=[circle,draw=black,fill=white,very thick,inner sep=0pt,minimum size=4mm]

\node (a1) at (0,1*1.3) [point] {\scriptsize{$a_1$}};
\node at (0,1.3-0.3) {\scriptsize{$1$}};

\node (c1) at (.5*1.3,1*1.3) [dot] {\scriptsize{$c_1$}};
\node at (.5*1.3,1*1.3-0.3) {\scriptsize{$3$}};

\node (a2) at (1*1.3,1*1.3) [point] {\scriptsize{$a_2$}};
\node at (1*1.3,1*1.3-0.3) {\scriptsize{$4$}};

\node (c2) at (1.5*1.3,1*1.3) [dot] {\scriptsize{$c_2$}};
\node at (1.5*1.3,1*1.3-0.3) {\scriptsize{$4$}};

\node (a3) at (2*1.3,1.5*1.3) [point] {\scriptsize{$a_3$}};
\node at (2*1.3,1.5*1.3-0.3) {\scriptsize{$2$}};

\node (a4) at (2*1.3,0.5*1.3) [point] {\scriptsize{$a_4$}};
\node at (2*1.3,0.5*1.3-0.3) {\scriptsize{$2$}};

\node (c3) at (2.7*1.3,1.8*1.3) [dot] {\scriptsize{$c_3$}};
\node at (2.7*1.3,1.8*1.3-0.3) {\scriptsize{$1-\epsilon$}};

\node (c4) at (2.7*1.3,1*1.3) [dot] {\scriptsize{$c_4$}};
\node at (2.7*1.3,1*1.3-0.3) {\scriptsize{$2$}};

\node (c5) at (2.7*1.3,0.2*1.3) [dot] {\scriptsize{$c_5$}};
\node at (2.7*1.3,0.2*1.3-0.3) {\scriptsize{$1-\epsilon$}};

%
%
%
%
%
%
%
%
%

\path (c1) edge node [above] {} (a1);
\path (c1) edge node [above] {} (a2);
\path (c2) edge node [above] {} (a2);
\path (c2) edge node [above] {} (a3);
\path (c2) edge node [above] {} (a4);
\path (c3) edge node [above] {} (a3);
\path (c4) edge node [above] {} (a4);
\path (c4) edge node [above] {} (a3);
\path (c5) edge node [above] {} (a4);

\end{tikzpicture}
}
\hspace{0.3cm}
\subfigure[An instance of uniform budget additive bidders with identical budgets that admits no Walrasian equilibrium.]
{
\label{fig:sub2}
\begin{tikzpicture}[scale=1.2]

\tikzstyle{dot}=[circle,draw=black,fill=white,thin,inner sep=0pt,minimum size=4mm]
\tikzstyle{point}=[circle,draw=black,fill=white,very thick,inner sep=0pt,minimum size=4mm]

\node (alpha1) at (-.5,1*1.3) [point] {\scriptsize{$\alpha_1$}};
\node at (-0.8,1.3) {\scriptsize{$1$}};

\node (c1) at (0,1*1.3) [dot] {\scriptsize{$c_1$}};
\node at (0,1*1.3-0.3) {\scriptsize{$2$}};

\node (d1) at (1*1.3,1*1.3) [dot] {\scriptsize{$d_1$}};
\node at (1.3,1) {\scriptsize{$2$}};

\node (b1) at (.5*1.3,.5*1.3) [point] {\scriptsize{$b_1$}};
\node at (.5*1.3,.5*1.3-0.3) {\scriptsize{$1$}};

\node (a1) at (.5*1.3,1.5*1.3) [point] {\scriptsize{$a_1$}};
\node at (.5*1.3,1.5*1.3+0.3) {\scriptsize{$1$}};

\node (beta) at (1.5*1.3,1*1.3) [point] {\scriptsize{$\beta$}};
\node at (1.5*1.3,1*1.3-0.3) {\scriptsize{$2$}};

\node (d2) at (2*1.3,1*1.3) [dot] {\scriptsize{$d_2$}};
\node at (2*1.3,1*1.3-0.3) {\scriptsize{$2$}};

\node (a2) at (2.5*1.3,1.5*1.3) [point] {\scriptsize{$a_2$}};
\node at (2.5*1.3,1.5*1.3+0.3) {\scriptsize{$1$}};

\node (c2) at (3*1.3,1*1.3) [dot] {\scriptsize{$c_2$}};
\node at (3*1.3,1*1.3-0.3) {\scriptsize{$2$}};

\node (b2) at (2.5*1.3,1/2*1.3) [point] {\scriptsize{$b_2$}};
\node at (2.5*1.3,1/2*1.3-0.3) {\scriptsize{$1$}};

\node (alpha2) at (3*1.3+0.5*1.3,1*1.3) [point] {\scriptsize{$\alpha_2$}};
\node at (4.85,1.3) {\scriptsize{$1$}};

\path (c1) edge node [above] {} (alpha1);
\path (c1) edge node [above] {} (a1);
\path (c1) edge node [above] {} (b1);
\path (d1) edge node [above] {} (a1);
\path (d1) edge node [above] {} (b1);
\path (d1) edge node [above] {} (beta);
\path (d2) edge node [above] {} (beta);
\path (d2) edge node [above] {} (a2);
\path (d2) edge node [above] {} (b2);
\path (c2) edge node [above] {} (a2);
\path (c2) edge node [above] {} (b2);
\path (c2) edge node [above] {} (alpha2);

\end{tikzpicture}
}
\caption{Instances of uniform budget-additive agents. Agents and items are represented by thin and thick nodes, respectively. An edge between an agent and item nodes means that the agent values the item. Values written next to agents and items correspond to budgets and (uniform) values, respectively.}
\label{fig:ab-gap}
\end{figure}

We first give an example in which no \acron{} allocation can achieve more than a $7/8$ fraction of the optimal welfare.

\begin{claim}
There is a profile of uniform budget-additive valuations for which no \acron{} allocation achieves more than $7/8$ of the optimal welfare.
\end{claim}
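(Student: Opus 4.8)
The plan is to exhibit a concrete instance and then use the characterization of Claim~\ref{claim.frac}: an allocation is \acron{} precisely when the configuration LP on the induced partition has an integral optimum placing each agent on his allocated bundle. So to prove the $7/8$ bound I would (i) describe a small instance — the one depicted in Figure~\ref{fig:sub1}, with items $a_1,\dotsc,a_4$ (uniform values $1,4,2,2$) and agents $c_1,\dotsc,c_5$ with the indicated budgets and desired-item sets — (ii) compute the optimal (unconstrained) social welfare $\OPT$ on this instance, and (iii) argue that \emph{every} allocation $\allocs$ with $\sum_i v_i(\alloc_i) > \tfrac78 \OPT$ fails the \acron{} test, i.e.\ its configuration LP on $\partition(\allocs)$ has a strictly better fractional solution.

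First I would pin down $\OPT$: go through the (few) candidate assignments of the four items to the five budget-additive agents, each agent's value for a bundle being $\min\{B_i,\sum v_j\}$ over items he likes, and identify the welfare-maximizing integral allocation. Call its value $W^\ast$; the claim is then that no \acron{} allocation exceeds $\tfrac78 W^\ast$. Because there are only $4$ items and each lands in one of $6$ bins (five agents plus unallocated), the space of allocations achieving welfare close to $W^\ast$ is tiny, so I would simply enumerate the allocations with welfare in $(\tfrac78 W^\ast, W^\ast]$. For each such allocation $\allocs$, I form the reduced market $\auction_{\partition(\allocs)}$ (the bundles $\alloc_1,\dotsc,\alloc_n$ as indivisible goods with the induced budget-additive values) and write down its configuration LP. I then either (a) exhibit an explicit fractional allocation of those bundles — typically splitting a ``large'' bundle fractionally between two agents whose budgets let them jointly extract more than the integral allocation does — whose objective strictly exceeds $\sum_i v_i(\alloc_i)$, or (b) equivalently, invoke the Bikhchandani--Mas-Colell characterization to note that no supporting price vector exists (some agent would always strictly prefer a different bundle or the empty set). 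By Claim~\ref{claim.frac}, this shows $\allocs$ is not \acron{}, and since this holds for every allocation above the $\tfrac78$ threshold, every \acron{} allocation achieves at most $\tfrac78 W^\ast$.

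The main obstacle is the case analysis in step (iii): I must make sure the enumeration of near-optimal allocations is genuinely exhaustive (easy to miss a symmetric case among $a_3,a_4$ and $c_3,c_4,c_5$), and for each one produce the right fractional ``witness'' for the configuration LP. The delicate point is that bundling can only help by merging items, so I should check not just allocations of singleton bundles but also allocations where the seller has pre-bundled items together — however, since every such allocation's induced partition is coarser, its fractional welfare is sandwiched between the integral welfare of that allocation and $\OPT$, and the same LP argument applies; I would phrase the argument so it uniformly covers coarser partitions. A secondary subtlety is verifying the budgets in Figure~\ref{fig:sub1} really do create the needed gap — in particular that agent $c_4$ (budget $2$, desiring $a_3$ and $a_4$) together with the budget-$(1-\epsilon)$ agents $c_3,c_5$ forces any integral allocation of $\{a_3,a_4\}$ to waste value that a fractional split recovers, while $a_2$ (value $4$) with the budget-$3$ and budget-$4$ agents $c_1,c_2$ plays the analogous role on the left. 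Once the instance's arithmetic is checked and the short enumeration carried out, the conclusion is immediate from Claim~\ref{claim.frac}.
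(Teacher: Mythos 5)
Your instance, your choice of tool (the configuration-LP characterization from Claim~\ref{claim.frac}), and the overall idea of combining a small case analysis with LP witnesses all match the paper. But the way you've organized the case analysis has a genuine defect.

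You propose to enumerate every allocation $\allocs$ with $\sum_i v_i(\alloc_i) > \tfrac78 W^\ast$ and, for each one, exhibit a strictly better fractional solution to its configuration LP, concluding that none of them is \acron{}. The problem is that this set contains allocations that \emph{are} \acron{}. Taking $W^\ast = 8-\epsilon$ (the integral optimum for Figure~\ref{fig:sub1}), your threshold $\tfrac78 W^\ast = 7 - \tfrac{7\epsilon}{8}$ sits strictly below $7$, yet the instance admits an \acron{} outcome of welfare exactly $7$: bundle $\{a_2,a_3\}$, allocate it to $c_2$ (value $4$), give $a_1$ to $c_1$ (value $1$) and $a_4$ to $c_4$ (value $2$); the prices $p_{c_1}=1$, $p_{c_2}=4$, $p_{c_4}=2$ make this a Walrasian equilibrium of the reduced market (each of $c_1,c_2,c_4$ has zero utility and demands its own bundle, $c_3$ and $c_5$ demand nothing, and the market clears). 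When your enumeration reaches this allocation, there is no fractional witness to produce, and the argument stalls. Put differently, the exact statement you set out to prove --- that every \acron{} allocation has welfare at most $\tfrac78 W^\ast$ --- is false for any fixed $\epsilon>0$; the correct bound is welfare at most $7$, giving a ratio $7/(8-\epsilon)\to 7/8$ as $\epsilon\to 0$.

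The paper sidesteps this by never trying to certify non-\acron{}-ness of near-optimal \emph{bundled} allocations. Instead it argues in two cleaner steps: (i) the fractional configuration LP over singleton items has value $8 > 8-\epsilon$, so by Claim~\ref{claim.frac} no \acron{} allocation can use singleton bundles only, and therefore some pair of items must be merged; (ii) for each of the (up to symmetry) pair-bundlings, the best \emph{integral} welfare over the induced partition is at most $7$ --- this is a direct welfare cap, not an LP argument, and it makes no claim about whether those $7$-welfare allocations are or are not \acron{}. Step (ii) is exactly what your plan is missing: rather than asking whether a given bundled allocation is \acron{}, cap the welfare that any bundled allocation can possibly achieve. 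If you want to keep your structure, replace the threshold $\tfrac78 W^\ast$ by $7$, and split the enumeration into ``no bundling'' (dispatched by the explicit fractional solution of value $8$) and ``some bundling'' (dispatched by the direct case analysis on pairs, which in fact also covers larger bundles since coarser partitions only restrict the allocations further).
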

\begin{proof}
Consider the instance given in Figure~\ref{fig:sub1} with 5 players, $\{c_i\}_{i=1..5}$ (thin nodes), and 4 items, $\{a_i\}_{i=1..4}$ (thick nodes).
The value written next to an agent node (resp., item node) corresponds to agent's budget (resp., item's value).
If an agent values some item, we draw an edge between the agent and item.

In this example, the optimal fractional assignment has social welfare $8$ (for example, the following achieves welfare $8$:
$c_1$ gets sets $\{a_1\}$ and $\{a_2\}$ with probability $1/2$ each,
$c_2$ gets sets $\{a_2\}$ and $\{a_3,a_4\}$ with probability $1/2$ each, and
$c_4$ gets sets $\{a_3\}$ and $\{a_4\}$ with probability $1/2$ each).
However, the optimal integral assignment obtains value $8-\epsilon$ (for example, in one optimal assignment $c_1$ gets $\{a_1\}$, $c_2$ gets $\{a_2\}$, $c_4$ gets $\{a_4\}$, and $c_3$ gets $\{a_3\}$).
Therefore, in order to get a \acron{} outcome, one must bundle some items together.
We claim that, for any bundling choice, one cannot achieve a higher welfare than $7$.  The argument proceeds via a case analysis:
\begin{enumerate}
\item  If the bundle is $\{a_1, a_2\}$, then if it is allocated to $c_1$ its contribution is $3$, and the total value of other items is $4$, for a total of $7$.  If $\{a_1, a_2\}$ is allocated to $c_2$ then one of $a_3$ or $a_4$ must be allocated to a player of budget $1 - \epsilon$, and hence the total value is at most $4 + 2 + 1-\epsilon < 7$.
\item  If the bundle is $\{a_2, a_3\}$ or $\{a_2, a_4\}$, then the bundle can contribute at most $4$ to the total welfare, and the total value of other items is $3$, for a total of $7$.  
\item  If the bundle is $\{a_3, a_4\}$, then if it is allocated to $c_4$ its contribution is $2$ and the total value of other items is $5$ for a total of $7$.  If it is allocated to $c_2$ then the only other player who can obtain value is $c_1$ with a budget of $3$, so the total value is at most $4+3 = 7$.
\item  If the bundle is $\{a_1, a_3\}$ or $\{a_1, a_4\}$, say $\{a_1, a_3\}$, then if the bundle is allocated to $c_1$ or $c_3$ then its value is at most $1$, which leads to total value at most $7$ since the other items have value $6$.  If the bundle is allocated to $c_2$ (for a value of $2$) then $a_2$ can contribute value at most $3$ and $a_4$ can contribute value at most $2$, for a total of $7$.  Finally, if the bundle is allocated to $c_4$ then whichever of the remaining items is not allocated to $c_2$ will contribute one less than its value to the total, and hence the total value is at most $2 + 2+ 4 - 1 = 7$.
\end{enumerate}
We conclude that every possible bundling and allocation results in a welfare of at most $7$, and hence the gap is established.
\end{proof}

%

We next present an algorithm that converts any allocation into a \acron{} allocation while preserving at least half of the original welfare.  The algorithm, presented as Algorithm \ref{uniform-budget-additive-CWE}, proceeds as follows.  Given an allocation profile $\allocs = (x_1, \dotsc, x_n)$, the algorithm checks whether there is an instance in which $v_j(x_i) > v_i(x_i)$.  Note that this can occur only if $\sum_{k \in x_i} v_k > B_i$ and $B_j > B_i$.  If there is no such instance, then the algorithm terminates and the current allocation is returned.  Otherwise, if there is a pair of agents $i$ and $j$ with $v_j(x_i) > v_i(x_i)$, then the item of lowest value in $x_i$ is removed from $x_i$ and added to $x_j$.  The algorithm then repeats.  Note that the algorithm must terminate, since every iteration results in an item being shifted from one agent to another agent with strictly larger budget.

The intuition behind UniformBudgetAdditive\acron{} is that we would like to reduce the social welfare of the optimal \emph{fractional} allocation of the bundles in $\mathbf{x}$.  Indeed, if there are no instances in which $v_j(x_i) > v_i(x_i)$, then it must be that the assignment $\mathbf{x}$ is an optimal \emph{fractional} allocation, and hence by Claim \ref{claim.frac} is \acron{}.  Thus, by transforming the input allocation into an allocation that satisfies this property, we are reducing the integrality gap of our reduced market to $1$.  What remains to show is that this iterative procedure does not drastically reduce the welfare of the allocation.  

\begin{theorem}
\label{thm.ba.non-identical}
Suppose $\allocs$ is an arbitrary allocation with welfare $SW(\allocs)$. Given $\allocs$, Algorithm UniformBudgetAdditive\acron{} returns, in polynomial time, a \acron{} outcome $\allocs'$ such that $SW(\allocs') \geq \frac{1}{2} SW(\allocs)$.
\end{theorem}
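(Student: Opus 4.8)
The plan is to verify three things about the output $\allocs'$ of UniformBudgetAdditive\acron{}: that it is obtained in polynomial time, that it is \acron{}, and that $SW(\allocs') \ge \tfrac12 SW(\allocs)$. Before starting I would normalize the input so that it partitions all of $\items$ and every agent values each object it holds: unallocated objects, and objects held by an agent that values them at $0$, can be folded into an arbitrary nonempty bundle without decreasing welfare (valuations are monotone). This is exactly what makes the remark ``$v_j(x_i)>v_i(x_i)$ forces $\textstyle\sum_{k\in x_i}v_k>B_i$ and $B_j>B_i$'' literally correct, and it also makes $\alloc_0$ empty throughout, so market clearance for $\alloc_0$ is trivial.

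For polynomial termination, I would observe that each iteration moves the cheapest object of an over-budget bundle $x_i$ into a bundle $x_j$ with $B_j>B_i$ \emph{strictly}. Hence, following any fixed object along the run, the budget of its current owner only strictly increases, so that object is moved at most $n-1$ times; there are therefore $O(mn)$ iterations, each of which merely scans ordered pairs of agents and finds a minimum-value object, and so runs in polynomial time.

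To show $\allocs'$ is \acron{}, note that at halting no pair $i,j$ has $v_j(x_i')>v_i(x_i')$; equivalently, the owner of each bundle $x_i'$ values it at least as much as any agent. Set $p_i'=v_i(x_i')$. By Claim~\ref{claim.frac} it suffices that the identity assignment $y_{i,\{x_i'\}}=1$ be an optimal fractional solution to the configuration LP of $\auction_{\partition(\allocs')}$, and I would certify optimality by exhibiting a matching dual solution: zero surplus $u_i=0$ for every agent, and price $v_l(x_l')$ on each bundle $x_l'$. Its objective equals $\sum_i v_i(x_i')$, matching the primal, so the only thing to check is dual feasibility, namely $\sum_{l\in S} v_l(x_l')\ \ge\ v_i\!\big(\bigcup_{l\in S}x_l'\big)$ for every agent $i$ and every sub-collection $S$ of bundles. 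This holds because $v_l(x_l')\ge v_i(x_l')$ by the halting condition, and $v_i$ is subadditive (every budget-additive valuation is), so $\sum_{l\in S}v_l(x_l')\ge\sum_{l\in S}v_i(x_l')\ge v_i\!\big(\bigcup_{l\in S}x_l'\big)$. Complementary slackness is immediate (a bundle with positive price is allocated; no agent has positive surplus), so the LP optimum is this integral assignment and Claim~\ref{claim.frac} gives that $(\allocs',\prices')$ is \acron{}.

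The welfare guarantee is the main obstacle. Writing $V(S)=\sum_{a\in S}v_a$ and letting $\mathrm{waste}(\allocs)=\sum_i\big(\max\{0,V_i(x_i)-B_i\}+\textstyle\sum_{a\in x_i:\,v_i(a)=0}v_a\big)$, where $V_i$ sums only objects $i$ values, one has the exact identity $SW(\allocs)=V(\items)-\mathrm{waste}(\allocs)$; hence the total welfare lost over the run is precisely the \emph{net} increase $\mathrm{waste}(\allocs')-\mathrm{waste}(\allocs)$, and it suffices to bound this net increase by $\tfrac12 SW(\allocs)$. The per-iteration behavior is benign: when the cheapest object $a$ of an over-budget $x_i$ is moved to $x_j$, the source's waste drops by $\min\{v_a,V_i(x_i)-B_i\}$ and the recipient's rises by at most $v_a$, so the net change per iteration is at most $\max\{0,v_a-(V_i(x_i)-B_i)\}$, which is $0$ unless $i$'s excess is below $v_a$ --- and since $a$ is the cheapest object of $x_i$, in that case $i$ is strictly under budget after the move, with value $B_i$ before it and $B_i-(v_a-(V_i(x_i)-B_i))$ after. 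The remaining, delicate step is to show these positive net increases sum to at most $\tfrac12 SW(\allocs)$: each increase is ``caused'' by an object climbing into a strictly higher budget class, where its new holder's value does not fall, and a bundle can be drained below budget again only after fresh objects enter it from still-lower budget classes; amortizing the increases against the value that survives in $\allocs'$ (using the strict monotonicity of each object's owner-budget established above) is what I expect to be the crux. This amortization is exactly where budget heterogeneity and the rule of moving the cheapest object are used --- and when all budgets coincide no ``barely-over-budget'' move can ever create a loss, which is precisely what lets the later refinement remove the factor of $2$.
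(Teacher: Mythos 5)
Your termination/polynomiality argument is fine and essentially the paper's (each moved object strictly climbs the budget ladder, so $O(mn)$ moves), and your LP-duality certification that the output is \acron{} is a valid reformulation of the paper's step: at halting $v_i(x_i')=\max_\ell v_\ell(x_i')$, which makes the integral assignment already optimal for the configuration LP of $\partition(\allocs')$; the paper says this verbally, you exhibit the dual explicitly, with subadditivity giving feasibility. That part is correct and tracks the paper.

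The welfare bound is where the proof stops. Your ``waste'' identity $SW(\allocs)=V(\items)-\mathrm{waste}(\allocs)$ is correct, and your per-move bound on the net increase in waste (at most $\max\{0,\,v_a-(V_i(x_i)-B_i)\}$, hence at most $B_i - V_i(x_i')$ for the one move that takes $i$ below budget) is also correct bookkeeping. But you then write that the ``remaining, delicate step is to show these positive net increases sum to at most $\tfrac12 SW(\allocs)$ $\ldots$ is what I expect to be the crux,'' and you never carry out that amortization. Nothing you have written — not the strict budget-monotonicity of each object's owner, not the rule of moving the cheapest element of $S$ — yields that sum bound by itself; as stated, you have reduced the theorem to a nontrivial claim and then stopped. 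Note also that a straight per-move telescoping cannot work, since the positive loss $B_i - V_i(x_i')$ in $i$'s final move can be as large as nearly $B_i$ (take $x_i=\{a,b\}$ with $v_a$ close to $B_i$, $v_b$ tiny, and only $a$ in $S$: after the move $V_i(x_i')=v_b$), so the losses really do have to be charged against value that resurfaces at the higher-budget recipients, not bounded locally. For comparison, the paper argues the factor of $2$ by a per-agent accounting rather than a per-move one: it asserts that, because items are shed in non-decreasing value order and every $j$ valued by $i$ has $v_j\le B_i$, an agent who was over budget in $\allocs$ is left with value at least $B_i/2$ in $\allocs'$. That claim is itself quite compressed (the example above shows an agent can end up far below $B_i/2$; the recovered value lives with the recipient, not with $i$), but it at least commits to a concrete retention bound from which the factor of $2$ follows. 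Your write-up commits to neither route, so the key inequality $SW(\allocs')\ge\tfrac12 SW(\allocs)$ remains unproven.
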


\begin{proof}
First, note that we can assume without loss of generality that $\allocs'$ is such that $j \in x'_i$ implies $v_i(j) > 0$ (else, take any object $j$ such that $j \in x'_i$ and $v_i(j) = 0$ and re-allocate it to an arbitrary agent with non-zero value for it; this cannot reduce welfare).

We next prove that $v_i(x'_i) = \max_{\ell \in [n]}v_{\ell}(\alloc'_i)$ for every $i \in [n]$.
If $\sum_{j \in x'_i}v_j$ does not exceed $B_i$, then $v_i(x'_i)$ is clearly the largest possible value of $x'_i$ by any agent.
If $\sum_{j \in x'_i}v_j$ exceeds $B_i$, then $v_i(x'_i)=B_i$.
Assume toward contradiction that there exists an agent $i'$ such that $v_{i'}(\alloc'_i) > v_i(\alloc'_i)$.
This implies that $B_{i'} > B_i$ and there exists an item $j \in x'_i$ such that $v_{i'}(j) > 0$.
But if this were the case, item $j$ would have moved to an agent with higher budget in line 10 of the algorithm; reaching a contradiction.
It follows that the optimal fractional allocation of bundles $\allocs'$ achieves the same social welfare of allocation $\allocs'$, and hence $\allocs'$ is CWE.

It remains to prove that $SW(\allocs') \geq \frac{1}{2} SW(\allocs)$.
Since the mechanism moves items in a non-decreasing order, and $v_i(j) \leq B_i$ for every $i,j$, it is guaranteed that every agent $i$ that exceeds his budget in $\allocs$, is left with value of at least $B_i/2$ in $\allocs'$.
The desired approximation follows.
\end{proof}

\begin{algorithm}
\caption{UniformBudgetAdditive\acron{}} \label{uniform-budget-additive-CWE}
\begin{algorithmic}[1]
\REQUIRE Uniform budget additive valuations $v_1, \dotsc, v_n$, agent budgets $B_1, \ldots, B_n$, and an allocation profile
$\allocs=(\alloc_1, \ldots, \alloc_n)$ (w.l.o.g., every item is allocated to a player with positive value for it).
\STATE Initialize $S \leftarrow \emptyset$
\IF{$\sum_{k \in x_i} v_i(k) \leq B_i$ for every $i$}
\RETURN $\allocs$
\ELSE
\FOR{$i \in [n]$, in a non-decreasing order of $B_i$}
\STATE $S \leftarrow \{ j \in x_i : \exists i' \mbox{ s.t. } v_{i'}(j)>0 \mbox{ and } B_{i'} > B_i\}$
\WHILE{$\sum_{k \in x_i} v_i(k) > B_i$ and $S \neq \emptyset$}
\STATE $j \leftarrow \argmin_{j \in S} \{ v_j \}$
\STATE $i' \leftarrow \argmax_{i \in [n]} \{B_i : v_i(j) > 0 \}$
\STATE $x_i \leftarrow x_i - j,\ S \leftarrow S - j,\ x_{i'} \leftarrow x_{i'} \cup j$ (i.e., move $j$ from $x_i$ to $x_{i'}$)
\ENDWHILE
\ENDFOR
\RETURN $\allocs$
\ENDIF
\end{algorithmic}
\end{algorithm}

\begin{AvoidOverfullParagraph}
As a corollary, the known $4/3$ approximation algorithm for budget-additive valuations can be turned into an $8/3$ approximation \acron{} mechanism in the case of uniform values.
Furthermore, the allocation returned by
Algorithm UniformBudgetAdditive\acron{}
has the property that the full surplus can be extracted from the \acron{} outcome as revenue.
\end{AvoidOverfullParagraph}

\begin{AvoidOverfullParagraph}
\begin{lemma}
\label{lem.ba.non-identical.revenue}
If agents are sub-additive and $\allocs$ is an allocation such that $v_i(\alloc_i) = \max_{\ell \in [n]} v_{\ell}(\alloc_i)$ for every $i \in [n]$, then the price vector $p_i = v_i(\alloc_i)$ is such that $(\allocs,p)$ is \acron{}.
\end{lemma}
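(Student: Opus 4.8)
The plan is to verify directly that the outcome $(\allocs,p)$ with $p_i=\vali(\alloci)$ satisfies the two conditions of a Walrasian equilibrium in the reduced auction $\auction_{\partition(\allocs)}$, whose items are the non-empty bundles $\alloc_1,\dotsc,\alloc_n$ together with the unallocated bundle $\alloc_0$, the latter priced at $0$; this is a cleaner route than invoking the configuration-LP characterization of Claim~\ref{claim.frac}. The price vector is legal since $\vali(\alloci)\ge 0$, and seller stability is immediate because the only unallocated item is $\alloc_0$, whose price is $0$. Hence the whole content is buyer stability: for each $i$ we must show $\alloci\in\demandi(p)$ in $\auction_{\partition(\allocs)}$, i.e.\ that no collection of bundles gives agent $i$ strictly positive utility, bearing in mind that $\alloci$ itself yields utility $\vali(\alloci)-p_i=0$.

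To do this I would fix an agent $i$ and an arbitrary index set $T\subseteq\{0,1,\dotsc,n\}$, and bound the utility agent $i$ derives from the corresponding set of bundles. Subadditivity of $\vali$, applied to the disjoint union $\bigcup_{j\in T}\alloc_j$, gives $\vali(\bigcup_{j\in T}\alloc_j)\le\sum_{j\in T}\vali(\alloc_j)$; and for each $j\ge1$ the lemma's hypothesis yields $\vali(\alloc_j)\le\max_{\ell\in[n]}\val_\ell(\alloc_j)=\val_j(\alloc_j)=p_j$, while the term $j=0$ is handled by $\vali(\alloc_0)=0=p_0$ (discussed below). Putting these together,
\[
\vali\!\left(\bigcup_{j\in T}\alloc_j\right)-\sum_{j\in T}p_j\;\le\;\sum_{j\in T}\bigl(\vali(\alloc_j)-p_j\bigr)\;\le\;0\;=\;\utili(\alloci,p).
\]
Thus $\alloci$ is a utility-maximizing bundle for agent $i$ at prices $p$, so buyer stability holds, and together with seller stability $(\allocs,p)$ is a WE of $\auction_{\partition(\allocs)}$ --- exactly the definition of \acron{}. (This parallels Theorem~\ref{thm.superadd}, with super-additivity and bundle-efficiency replaced by sub-additivity together with the hypothesis that each $\alloci$ is worth the most to agent $i$.)

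The only delicate point --- and the sole real obstacle --- is the unallocated bundle $\alloc_0$: since it is offered for free, if some agent $i$ had $\vali(\alloc_0)>0$ then appending $\alloc_0$ to $\alloci$ could strictly raise $i$'s utility and destroy buyer stability. I would therefore read the lemma, as in its intended application to the output of Algorithm~\ref{uniform-budget-additive-CWE}, under the convention that $\allocs$ allocates every object that some agent values positively; then every object in $\alloc_0$ has value $0$ to every agent, and subadditivity together with $\vali(\emptyset)=0$ forces $\vali(\alloc_0)=0$, supplying the $j=0$ term used above. Everything else is the one-line inequality chain driven by subadditivity and the ``highest-value owner'' hypothesis.
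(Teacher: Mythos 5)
Your proof is correct and follows essentially the same route as the paper: bound $\vali$ of a union of bundles by the sum of bundle values via sub-additivity, then use the hypothesis $\vali(\alloc_j)\le\val_j(\alloc_j)=p_j$ to conclude that no coalition of bundles gives agent $i$ positive utility, so $\alloci$ (with utility $0$) is demanded. Your explicit treatment of the unallocated bundle $\alloc_0$ is a genuine addition --- the paper's proof quantifies only over sets of agents $S\subseteq[n]$ and silently presumes $\alloc_0$ is worthless to every agent, which holds in the intended application (the output of Algorithm~\ref{uniform-budget-additive-CWE}) but is not stated in the lemma's hypotheses; your paragraph flags exactly the convention needed for the lemma to be literally true.
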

\end{AvoidOverfullParagraph}

\begin{proof}
Pick any agent $i$ and set of agents $S$.
Since for every $i$, $v_i(\alloc_i) = \max_{\ell \in [n]} v_{\ell}(\alloc_i)$, it follows that
$\sum_{j \in S} v_i(\allocs_j) - \sum_{j \in S} v_j(\allocs_j) \leq 0$.
We get:
\begin{eqnarray*}
v_i(\alloc_i) - p_i =  0 \geq  \sum_{j \in S} v_i(\allocs_j) - \sum_{j \in S} v_j(\allocs_j)
& = & \sum_{j \in S} v_i(\allocs_j) - \sum_{j \in S} p_j \\
& \geq & v_i (\cup_{j \in S} \allocs_j) - \sum_{j \in S}p_j,
\end{eqnarray*}
where the last inequality follows from sub-additivity.
The assertion of the lemma follows.
\end{proof}

As a corollary, we get the following approximation result.

\begin{corollary}
\label{cor.ba.non-identical.revenue}
For every instance of uniform budget-additive valuations, there exists a poly-time mechanism that generates a \acron{} outcome that is within factor $8/3$ of the optimal revenue.
\end{corollary}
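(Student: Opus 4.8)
The plan is to compose three ingredients that are already in hand: the known polynomial-time $4/3$-approximation algorithm for welfare maximization with (uniform) budget-additive valuations, the reduction of Theorem~\ref{thm.ba.non-identical}, and the full-surplus-extraction property of Lemma~\ref{lem.ba.non-identical.revenue}. First I would record the elementary upper bound: in any buyer-stable outcome $(\allocs,\prices)$ every agent has nonnegative utility, so $\pricei \le \vali(\alloci)$, and hence the revenue $\sum_i \pricei$ is at most $\sum_i \vali(\alloci) \le \OPT$, where $\OPT$ denotes the optimal social welfare over all allocations. In particular the optimal revenue (over \acron{} outcomes, or indeed any outcomes) is at most $\OPT$, so it suffices to produce a \acron{} outcome whose \emph{revenue} is at least $\tfrac{3}{8}\OPT$.

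The mechanism itself is the natural pipeline. Run the poly-time $4/3$-approximation algorithm of \cite{Chakrabarty2008} to obtain an allocation $\allocs$ with $SW(\allocs) \ge \tfrac{3}{4}\OPT$, then feed $\allocs$ into Algorithm UniformBudgetAdditive\acron{}. By Theorem~\ref{thm.ba.non-identical} this returns, in polynomial time, a \acron{} allocation $\allocs'$ with $SW(\allocs') \ge \tfrac{1}{2} SW(\allocs) \ge \tfrac{3}{8}\OPT$. Moreover, the proof of Theorem~\ref{thm.ba.non-identical} establishes that $\allocs'$ satisfies $\vali(\alloc'_i) = \max_{\ell \in [n]} v_{\ell}(\alloc'_i)$ for every $i$, and uniform budget-additive valuations are sub-additive (the set function $v(S)=\min\{B,\sum_{j\in S}v_j\}$ is sub-additive). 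Hence Lemma~\ref{lem.ba.non-identical.revenue} applies: with prices $\pricei = \vali(\alloc'_i)$ the pair $(\allocs',\prices)$ is a \acron{}, and its revenue is exactly $\sum_i \pricei = \sum_i \vali(\alloc'_i) = SW(\allocs')$.

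Combining, the revenue extracted by this mechanism is $SW(\allocs') \ge \tfrac{3}{8}\OPT \ge \tfrac{3}{8}\cdot(\text{optimal revenue})$, which is the claimed factor $8/3$. I do not expect a serious obstacle here; the only point that genuinely needs care is the first step, namely verifying that ``optimal revenue'' is bounded above by the optimal social welfare, so that a welfare guarantee can be converted into a revenue guarantee. Everything else is a direct invocation of the cited $4/3$-approximation, of Theorem~\ref{thm.ba.non-identical}, and of Lemma~\ref{lem.ba.non-identical.revenue}. It is worth noting explicitly that this is an instance of the general principle mentioned in the introduction: for each of these mechanisms the welfare bound is realized as equilibrium revenue, so all the welfare approximation results carry over verbatim to revenue.
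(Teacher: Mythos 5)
Your proposal is correct and follows exactly the same pipeline as the paper's own proof: run the $4/3$-approximation, apply Theorem~\ref{thm.ba.non-identical} to get an $8/3$-approximate \acron{} allocation satisfying $v_i(\alloc'_i) = \max_\ell v_\ell(\alloc'_i)$, invoke Lemma~\ref{lem.ba.non-identical.revenue} for full-surplus extraction, and close with the observation that optimal revenue is bounded by optimal welfare. You simply spell out the chain of inequalities more explicitly than the paper does.
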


\begin{proof}
We have shown that one can turn the 4/3 SW-approximation algorithm for budget-additive valuations into an 8/3 SW-approximation CWE mechanism that has the property $v_i(\alloc_i) = \max_{\ell \in [n]} v_{\ell}(\alloc_i)$ for every $i \in [n]$.
The last lemma implies that the full surplus can be extracted by a CWE mechanism.
Observing that the social welfare is an upper bound on the profit establishes the assertion of the corollary.
\end{proof}

\subsection{Identical Budgets}

In this section we study the restricted case of identical budgets.
We note that the welfare maximization problem is NP-hard, even under this restriction, as it includes PARTITION as a special case\footnote{For a given instance of PARTITION with integers $a_1, \ldots, a_n$ such that $\sum_{j \in [n]}a_j=2B$, construct an instance of our problem with two agents, each with budget $B$, and $n$ items, with item $j$ having value $a_j$.}.
Moreover, we shall show that there are input instances in this class for which Walrasian equilibria do not exist\footnote{While this is well known for the case of arbitrary budget additive valuations, here we consider the further restricted class of uniform values and identical budgets.}.
Nevertheless, we shall show that any allocation can be transformed into a \acron{} allocation with no loss to the social welfare.

We first show that there exist input instances for which no Walrasian equilibrium exists.
Consider the instance given in Figure~\ref{fig:sub2} with 4 players, $c_i, d_i$ for $i \in \{1,2\}$, and 7 items: $a_i, b_i, \alpha_i$, and $\beta$ for $i \in \{1,2\}$.  Each player has budget $2$.  Each item has value $1$, except for item $\beta$ which has value $2$.
For $i \in \{1,2\}$, player $c_i$ has value for objects $a_i, b_i$, and $\alpha_i$, and player $d_i$ has value for objects $a_i, b_i$, and $\beta$.
In this example, the optimal fractional assignment has social welfare $8$.  For example, the fractional assignment in which $c_i$ gets sets $\{a_i, \alpha_i\}$ and $\{b_i, \alpha_i\}$ with probability $1/2$, and $d_i$ gets sets $\{a_i, b_i\}$ and $\{\beta\}$ with probability $1/2$, for $i \in \{1,2\}$, achieves welfare $8$.  However, the optimal integral assignment obtains value at most $7$.  Thus the optimal fractional welfare does not occur at an integral solution, and hence a WE does not exist.

We now show that, for any allocation, there exists a \acron{} allocation that obtains at least as much social welfare, and moreover this \acron{} allocation can be found efficiently.  

\begin{theorem}
\label{thm.identical.sw}
Suppose $\allocs$ is an arbitrary allocation.  There exists \acron{} allocation $\allocs'$ with $SW(\allocs') \geq SW(\allocs)$, and $\allocs'$ can be found in polynomial time given $\allocs$.
\end{theorem}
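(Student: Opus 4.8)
The plan is to use the characterization of Claim~\ref{claim.frac}: an allocation $\allocs'$ is \acron{} if and only if the configuration LP over the bundles $\partition(\allocs')$ has an integral optimum setting $y_{i,\alloc'_i}=1$ for all $i$. So I want to transform $\allocs$ into $\allocs'$ so that (i) the integral allocation of the bundles $\allocs'$ is an optimal \emph{fractional} allocation of those bundles, and (ii) $SW(\allocs')\ge SW(\allocs)$. By Lemma~\ref{lem.ba.non-identical.revenue} (applied with sub-additive, hence budget-additive, valuations), condition (i) is guaranteed as soon as $v_i(\alloc'_i)=\max_{\ell}v_\ell(\alloc'_i)$ for every $i$; equivalently, no bundle in $\allocs'$ is ``wasted'' by being held by an agent who is more budget-constrained on it than some other agent. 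Since all budgets are equal to a common value $B$, the only way $v_\ell(\alloc'_i)>v_i(\alloc'_i)$ can happen is if agent $i$'s value for $\alloc'_i$ is strictly below $B$ (hence equals $\sum_{j\in\alloc'_i: v_i(j)>0}v_j$) while agent $\ell$'s value is larger; because budgets coincide, this forces some item $j\in\alloc'_i$ with $v_i(j)=0$ but $v_\ell(j)>0$.

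The key consequence of identical budgets is this: moving such an item $j$ from $i$ to $\ell$ \emph{never decreases welfare}. Indeed $i$ loses nothing ($v_i(j)=0$), and $\ell$ gains $\min\{B,\;v_\ell(\alloc'_\ell)+v_j\}-v_\ell(\alloc'_\ell)\ge 0$. Contrast this with the arbitrary-budget case, where the minimum-value item relocated might have had positive value to its original owner, forcing the factor-$2$ loss in Theorem~\ref{thm.ba.non-identical}. So the algorithm is: repeatedly find a bundle $\alloc_i$ and an agent $\ell$ with $v_\ell(\alloc_i)>v_i(\alloc_i)$, locate an item $j\in\alloc_i$ with $v_i(j)=0<v_\ell(j)$, and move $j$ from $i$ to $\ell$ (equivalently, keep moving zero-value items out of over-budget, sub-optimally-held bundles). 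Each such move weakly increases $SW$.

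The main obstacle is termination: naively, items could be shuffled back and forth. I would handle this with a potential-function argument. One natural choice: process agents in some fixed order and, as in Algorithm~\ref{uniform-budget-additive-CWE}, only ever move an item from an agent to one who currently has strictly more value for the relevant bundle — but with identical budgets the ``strictly larger budget'' monotonicity of the previous algorithm is unavailable, so instead I would argue directly. A cleaner route: first, for each agent $i$ with $v_i(\alloc_i)=B$ (i.e.\ exactly at budget), the bundle is already a global maximizer ($v_\ell(\alloc_i)\le B = v_i(\alloc_i)$), so these bundles need never be touched. For the remaining agents, $v_i(\alloc_i)=\sum_{j\in\alloc_i}v_i(j)<B$. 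Strip from each such bundle every item with $v_i(j)=0$ and reassign each stripped item greedily to any agent who values it and is not yet at budget, preferring to fill bundles up toward $B$; since total value is conserved or increased and each reassignment only helps, after finitely many such reassignments (the quantity $\sum_i \min\{B, v_i(\alloc_i)\}$ is non-decreasing and bounded by $nB$, and strictly increases whenever a positive-value item finds a non-saturated home, while zero-progress moves can be ordered to avoid cycling) we reach an allocation $\allocs'$ in which every item is held by someone who values it, i.e.\ $v_i(\alloc'_i)=\max_\ell v_\ell(\alloc'_i)$ for all $i$. I will need to verify carefully that the process can be scheduled to run in polynomial time — e.g.\ by making at most one pass that collects all the ``orphan'' zero-value items and then distributes them in a single sweep, so that no item is moved more than a constant number of times. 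Then Lemma~\ref{lem.ba.non-identical.revenue} gives that $\allocs'$ is \acron{}, and the welfare bookkeeping gives $SW(\allocs')\ge SW(\allocs)$, completing the proof.
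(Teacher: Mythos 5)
Your proposal is correct and takes essentially the same route as the paper: strip items with $v_i(j)=0$ from their holders and hand each to an arbitrary agent who values it (or leave it unallocated if none does), then observe that with identical budgets this yields $v_i(x'_i)=\min\{B,\sum_{j\in x'_i}v_j\}=\max_\ell v_\ell(x'_i)$ for every $i$, so the integral welfare already equals the optimal fractional welfare of the bundles and the allocation is \acron{} (via Claim~\ref{claim.frac}, or equivalently Lemma~\ref{lem.ba.non-identical.revenue} as you invoke it). The only real departure is your lengthy worry about termination and cycling: it is a non-issue, because an item is moved only from an agent with zero value for it to an agent with positive value for it, so once placed it never needs to move again and a single $O(m)$-step pass (as you eventually note) suffices --- the paper therefore omits any potential-function argument entirely.
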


\begin{proof}
Given $\allocs$, we construct $\allocs'$ by taking any object $j$ such that $j \in x_i$ with $v_i(j) = 0$ and re-allocating it to an arbitrary agent with non-zero value for it.  This can be done in polytime and can only increase the social welfare of the resulting allocation.  It remains to show that this allocation is \acron{}.

Suppose $\allocs'$ is such that $j \in x_i'$ implies $v_i(j) > 0$.  In this case, $v_i(x_i') = \min\{B, \sum_{j \in x_i'} v_j\} = \max_\ell v_\ell(x_i')$ for each $i$.  The social welfare of allocation $\allocs'$ is therefore $\sum_i \max_{\ell} v_\ell(x_i')$, which is an upper bound on the value of any fractional assignment of bundles $x_1', \dotsc, x_n'$.  Thus the optimal fractional allocation of these bundles obtains the same welfare as $\allocs'$, and hence $\allocs'$ is \acron{} as required.
\end{proof}

As a corollary, the known $4/3$ approximation algorithms for budget-additive valuations can be made \acron{} for the case of uniform values and identical budgets.
Note that the complementary case, in which the budgets are identical but the item values are non-uniform, also has instances where no \acron{} allocation can achieve the welfare of an optimal allocation.
Consider, for example, an instance with 3 players, $c_1, c_2, c_3$, each with budget $2$, and 3 items, $a_1,a_2,a_3$, where $v_{c_1}(a_1)=2, v_{c_1}(a_2)=v_{c_1}(a_3)=1$, $v_{c_2}(a_2)=v_{c_2}(a_3)=2$, and $v_{c_3}(a_1)=4 \epsilon, v_{c_3}(a_3)=\epsilon$ (and all other values are zero).
One can easily verify that there exists a fractional solution that achieves welfare $4+2\epsilon$, while the highest welfare that can be obtained by any integral solution is $4+\epsilon$.
However, by bundling together any two items, the highest achievable welfare is reduced to at most $4$.

\bibliography{MEF-Bib}
\bibliographystyle{abbrv}
\appendix


\appendix

\section{Relation to envy-free (EF) notions}
\label{sec:ef}

The \acron{} notion is closely related to several notions of \emph{envy freeness} that have been extensively studied in recent years.
In what follows, we shall describe the difference and similarities between \acron{} and these notions.

\paragraph{Envy free 1 (EF1)} (as in, e.g., \cite{MuAlem2009,Cohen2010,Cohen2011,Fleischer2011,Papai2003}).

Let $\allocs=(\alloc_1, \ldots, \alloc_n)$ be an allocation, and let $\prices=(\price_1, \ldots, \price_n)$ be prices, such that $p_i$ is the price of $\alloc_i$.
An outcome $(\allocs,\prices)$ is said to be EF1
if no agent wishes to switch her outcome with another;
i.e., for every $i,j \in [n]$, $v_i(\alloc_1) - \price(\alloc_i) \geq v_i(\alloc_j) - \price(\alloc_j)$.

An obvious way to view the difference between \acron{} and EF1 is that EF1 requires that no agent envies a single other agent,
while \acron{} precludes envy in any set of other agents.
However, a more subtle examination reveals that any EF1 outcome can be extended to an outcome that exhibits the latter (stronger) requirement,
but only if one allows for arbitrary bundle prices.
This is in stark contrast to the linear prices required by \acron{}.
Moreover, \acron{}, unlike EF1, requires market clearing.

\paragraph{Envy free 2 (EF2)} (as in, e.g., \cite{Guruswami2005,Balcan2008,Cheung2008}).

Let $\allocs=(\alloc_1, \ldots, \alloc_n)$ be an allocation, and let $\prices=(\price_1, \ldots, \price_m)$ be item prices, assigning a price $p_j$ for every $j \in M$.
An outcome $(\allocs,\prices)$ is said to be EF2 if $\alloc_i \in D_i(\prices)$ for every $i \in [n]$.
This notion entails the first requirement of a Walrasian equilibrium, but does not require market clearing.

\acron{} is, on the one hand, a relaxation of this notion, as it allows for a pre-sale bundling phase.
On the other hand, it requires market clearing, whereas EF2 does not.

\section{Revenue enhancement}
\label{app:revenue}

We now point out that revenue can increase substantially when we optimize revenue with respect to equilibria with bundles, as opposed to the individual item prices of standard Walrasian equilibrium.
For example, consider a case with two bidders and three objects $\{a_1,a_2,a_3\}$. Player $1$'s valuation is given by $v_1(S) = 1$ if $a_1 \in S$, otherwise $v_1(S) = 0$.  Choose an arbitrarily large constant $R$.  Then player $2$'s valution is additive up to a capacity of $2$ objects, with values $R-1, R, R$ for items $a_1,a_2,a_3$ respectively.  In this example, both players satisfy gross substitutes, and the efficient allocation is $(\{a_1\}, \{a_2,a_3\})$.  This allocation can be supported by a Walrasian equilibrium, and the maximum supporting prices are $p_1 = 1, p_2 = 2, p_3 = 2$, for a total revenue of $5$.  However, the revenue-maximizing \acron{} prices set the price of $\{a_1\}$ to $1$ and the price of $\{a_2,a_3\}$ to $R+2$, for a total revenue of $R+3$.  This gap can be made arbitrarily large by increasing $R$.

\section{Necessity of Bundling}
\label{app.bundling.necessary}

We show by way of example that the use of bundling is necessary for the statement of Theorem \ref{thm.superadd}, even if we relax the market-clearing requirement of Walrasian equilibrium and even for single-minded bidders.  We present an auction instance such that, for any set of item prices for which agent demand sets are disjoint, the social welfare of the resulting allocation is an $O(\sqrt{m})$ fraction of the optimal social welfare.

\begin{example}
Consider a combinatorial auction with $n = \sqrt{m}+1$ single-minded bidders, as follows.  One bidder desires set $M$ for a value of $m$.  Each of the remaining bidders $i$ desires a set $S_i$ of size $\sqrt{m}$, for a value of $1+2\sqrt{m}$.  These sets have the property that $|S_i \cap S_j| = 1$ for all $i \neq j$, and for every object $a \in M$, $a$ is contained in at most $2$ sets; the existence of such sets follows easily by induction on their size\footnote{That is, that there exist $t$ sets of size $t$ with this property, for all $t \leq \sqrt{m}$.}.

The optimal outcome allocates all objects to the first bidder, for a total value of $m$.  Suppose this optimal outcome can be supported by item prices $\mathbf{p}$.  Then $\sum_{a \in M} p_a \leq m$, but a counting argument demonstrates that for some set $S_i$, it must be that $\sum_{a \in S_i} p_a \leq 2\sqrt{m} < 1+2\sqrt{m}$.  Player $i$ would demand set $S_i$ at these prices, a contradiction.  Thus, at any pricing equilibrium, some set $S_i$ must be allocated to a buyer $i$; but this implies that no other set $S_j$ can be allocated, as $S_j \cap S_i \neq \emptyset$ for all $j \neq i$.  Thus the total social welfare at any equilibrium outcome with item prices is at most $1+2\sqrt{m}$.
\end{example}

\end{document}